\newtheorem{theorem}{Theorem}
\newtheorem{lemma}{Lemma}
\newtheorem{definition}{Definition}
\newtheorem{proposition}{Proposition}
\providecommand{\keywords}[1]
{
  \small	
  % \textbf{\textit{Keywords---}} #1
  \textbf{[Key words]:} #1
}
\lstdefinestyle{lfonts}{
  basicstyle   = \footnotesize\ttfamily,
  stringstyle  = \color{purple},
  keywordstyle = \color{blue!60!black}\bfseries,
  commentstyle = \color{olive}\scshape,
}
\lstdefinestyle{lnumbers}{
  numbers     = left,
  numberstyle = \tiny,
  numbersep   = 1em,
  firstnumber = 1,
  stepnumber  = 1,
}
\lstdefinestyle{llayout}{
  breaklines       = true,
  tabsize          = 2,
  columns          = flexible,
}
\lstdefinestyle{lgeometry}{
  xleftmargin      = 20pt,
  xrightmargin     = 0pt,
  frame            = tb,
  framesep         = \fboxsep,
  framexleftmargin = 20pt,
}
\lstdefinestyle{lgeneral}{
  style = lfonts,
  style = lnumbers,
  style = llayout,
  style = lgeometry,
}
\lstdefinestyle{python}{
    language = {Python},
    style    = lgeneral,
}
\title{Maximum Likelihood Estimates of Parameters in Generalized Gamma Distribution with SeLF Algorithm}
\author{Yufei Cai\footnote{The author's email address is \href{mailto:zhmgczh@gmail.com}{zhmgczh@gmail.com}.}}
\date{\today}
\begin{document}
\maketitle
\begin{abstract}
    This undergraduate thesis focuses on the problem of calculating maximum likelihood estimates (MLEs) of parameters in the generalized Gamma distribution using the SeLF algorithm. As an extension of the Gamma distribution, the generalized Gamma distribution can better fit real data and has been widely applied. The research begins by exploring the definition of the generalized Gamma distribution and its similarities and differences from the traditional Gamma distribution. Then, the SeLF and US algorithms are discussed in detail. The SeLF algorithm is a new algorithm based on the Minorization-Maximization (MM) algorithm, which can obtain the local optimal solution with few iterations, with the advantages of fast computation, high accuracy, and good convergence. The US algorithm is a method for finding the zeros of a function, which stands at a higher level than the SeLF algorithm and can improve the convergence speed and stability. This thesis proposes a method for calculating maximum likelihood estimates of the parameters in the generalized Gamma distribution using the SeLF and US algorithms, and presents the practical implementation of the algorithms, as well as simulations and data analysis to evaluate the performance of the proposed methods. The results demonstrate that the SeLF algorithm can achieve more stable and accurate estimates of the parameters in the generalized Gamma distribution more quickly, compared to traditional Newton's method, which can be useful in various applications. This thesis provides a comprehensive and in-depth exploration of the generalized Gamma distribution and the SeLF algorithm, and proposes a new method for calculating maximum likelihood estimates of parameters, contributing to the development of statistical methods for parameter estimation in complex models. It also provides a reference for further exploring the theory and application of the generalized Gamma distribution. The proposed method in this thesis has important practical significance and application value for solving practical problems.
\end{abstract}

\keywords{generalized Gamma distribution; MLE; Second–derivative Lower–bound Function (SeLF) algorithm; Upper-crossing/Solution (US) algorithm}

\section{Generalized Gamma Distribution}

\subsection{Introduction}

The Gamma distribution is a commonly used probability distribution that has many applications in various fields, such as engineering, physics, and finance. It is a continuous probability distribution that is widely used to model waiting times, survival analysis, and lifetime data. However, there are situations where the Gamma distribution may not be an appropriate model due to its restrictive shape. In the paper \cite{stacy1962generalization}, \citeauthor{stacy1962generalization} has introduced a generalization of the Gamma distribution called the generalized Gamma distribution (GGD), which allows for more flexibility in the shape of the distribution. In this section, we discuss the properties and applications of this distribution and compare it to the (traditional) Gamma distribution.

\subsection{Definition of Generalized Gamma Distribution}

The generalized Gamma distribution is a three-parameter continuous probability distribution that has a flexible shape and can model a wider range of data than the Gamma distribution.

\begin{definition}[Generalized Gamma Distribution]
    The generalized Gamma distribution is a distribution with the general formula of the probability density function (PDF) given by:
    $$f(x;\alpha,\beta,\gamma)=\frac{\beta^\alpha\gamma}{\Gamma(\alpha/\gamma)}x^{\alpha-1}e^{-(\beta x)^\gamma},\ x>0$$
    where $\alpha,\beta>0$ are the shape parameters, $\gamma>0$ is the scale parameter and $\Gamma$ is the gamma function which is defined by $\Gamma(x)=\int_{0}^{\infty}t^{x-1}e^{-t}{\rm d}t$ for any $x>0$.
\end{definition}

The Gamma distribution is a special case of the generalized Gamma distribution when $\gamma=1$. When $\gamma>1$, the generalized Gamma distribution has a heavier tail than the Gamma distribution, while when $\gamma<1$, it has a lighter tail.

\subsection{Properties of Generalized Gamma Distribution}

The generalized Gamma distribution has several properties that make it a useful model for a wide range of data. Some of these properties are:

\begin{itemize}
    \item Flexibility: The generalized Gamma distribution can model a wide range of data shapes due to its flexibility. The shape of the distribution can be adjusted by varying the shape and scale parameters.

    \item Moment Generating Function: The moment generating function (MGF) of the generalized Gamma distribution is given by:
        $$
        M(t)=\sum_{n=0}^{\infty}\frac{(\frac{t}{\beta})^n}{n!}\cdot\frac{\Gamma(\frac{\alpha+n}{\gamma})}{\Gamma(\frac{\alpha}{\gamma})}
        $$
        where $\Gamma$ denotes the Gamma function. This can be used to derive the moments of the distribution.

    \item Cumulative Distribution Function: The cumulative distribution function (CDF) of the generalized Gamma distribution does not have a closed-form solution. However, it can be approximated using numerical methods.
\end{itemize}

\subsection{Applications of Generalized Gamma Distribution}

The generalized Gamma distribution has several applications in various fields. Some of these applications are:

\begin{itemize}
    \item Lifetime Data: The generalized Gamma distribution can be used to model lifetime data, such as the time to failure of a component in a machine. For a more detailed discussion, please refer to \citeauthor{shanker2016modeling}'s paper \cite{shanker2016modeling}.

    \item Wind Speed: The generalized Gamma distribution can be used to model wind speed data, which is important in the design of wind turbines. For a more detailed discussion, please refer to the paper \cite{mert2015statistical} by \citeauthor{mert2015statistical}.

    \item Finance: The generalized Gamma distribution can be used to model the distribution of stock returns, which can help in the analysis of financial data. For a more detailed discussion, please refer to \citeauthor{gomes2006four}'s paper \cite{gomes2006four}.
\end{itemize}

\subsection{Comparison with Gamma Distribution}
The gamma distribution is a well-known distribution used in many fields, including statistics, physics, engineering, and finance. The generalized Gamma distribution (GGD) is a generalization of the gamma distribution that offers more flexibility in modeling real-world data. In this subsection of the thesis, we compare and contrast the two distributions and highlight the advantages of the GGD.

\begin{definition}[Gamma Distribution]
    The gamma distribution is a family of continuous probability distributions with two parameters, including a shape parameter $\alpha$ and a scale parameter $\beta$. Its probability density function (PDF) is given by:
    $$f(x;\alpha,\beta)=\frac{\beta^{\alpha}}{\Gamma(\alpha)}x^{\alpha-1}e^{-\beta x},\ x>0$$
    where $\Gamma(\alpha)$ is the gamma function.
\end{definition}

The gamma distribution has many applications, including in reliability analysis, queuing theory, and finance. However, it is limited in its ability to model data with different shapes and tail behavior. In particular, the gamma distribution has a monotonically decreasing hazard function, which means that the failure rate decreases over time.

The generalized Gamma distribution is a more flexible distribution than the gamma distribution because it allows for different tail behavior and can model data with more complex shapes. The parameter $\gamma$ controls the shape of the distribution, and when $\gamma=1$, the generalized Gamma distribution reduces to the gamma distribution. The generalized Gamma distribution has been used in many applications, including in modeling rainfall, wind speed, and response times in neuroscience.

The gamma distribution is a special case of the generalized Gamma distribution, and therefore, it is less flexible than the generalized Gamma distribution. The generalized Gamma distribution can model data with more complex shapes and tail behavior, making it more suitable for real-world applications. In addition, the generalized Gamma distribution has a more interpretable shape parameter $\gamma$, which allows for easier interpretation of the distribution's properties. However, the gamma distribution is simpler to work with and has a closed-form solution for many statistical problems, making it more computationally efficient.

In summary, the generalized Gamma distribution is a more flexible distribution than the gamma distribution and offers more options for modeling real-world data. However, the gamma distribution is simpler to work with and has closed-form solutions for many statistical problems. The choice between the two distributions depends on the specific application and the goals of the analysis.

\section{Second–derivative Lower–bound Function Algorithm}

\subsection{Introduction}

The paper \cite{tian2025second} by \citeauthor{tian2025second} has proposed the \textit{second-derivative lower-bound function} (SeLF) algorithm, which is a novel maximization method based on the framework of \textit{minorization-maximization} (MM) algorithms. The SeLF algorithm is a useful tool for calculating the maximum likelihood estimate (MLE) $\hat{\theta}$ of some parameter in a one-dimensional log-likelihood function $l(\theta)$ (or maximize a more generalized target function). In order to introduce the SeLF algorithm more clearly, we have to spend some space introducing MM algorithms first, because the SeLF algorithm is based on the fundamental principles of MM algorithms.

\subsection{Minorization-Maximization Algorithms}

Minorization-Maximization (MM) algorithms are a family of iterative optimization algorithms that are commonly used in machine learning and statistical inference. These algorithms are particularly useful for solving optimization problems that involve log-concave likelihood functions or other functions that are difficult to optimize directly. In this subsection, we will provide a formal definition of MM algorithms and discuss some of their key properties and applications. \citeauthor{hunter2004tutorial} compiled the general principles of MM algorithms in their paper \cite{hunter2004tutorial}.

\subsubsection{Definition}

The basic idea behind MM algorithms is to construct a sequence of simpler functions that ``minorize'' the objective function of interest, and then optimize each of these simpler functions in turn. At each iteration, the algorithm chooses a minorizing function that is a lower bound on the objective function, and then optimizes this minorizing function. This optimization step is known as the ``maximization'' step, since we are maximizing a lower bound on the objective function.

More formally, suppose we wish to maximize a function $l(\theta)$ over some parameter space $\Theta$. We define a surrogate function $Q(\theta|\theta^{(t)})$ as a function that depends on the current estimate $\theta^{(t)}$, and that satisfies the following two conditions:

\begin{enumerate}
\item $Q(\theta|\theta^{(t)})\leq l(\theta)$ for all $\theta \in \Theta$
\item $Q(\theta^{(t)}|\theta^{(t)})=l(\theta^{(t)})$
\end{enumerate}

Condition (1) ensures that $Q(\theta|\theta^{(t)})$ is a lower bound on $l(\theta)$, while condition (2) ensures that the bound is tight at the current estimate $\theta^{(t)}$. Note that there may be many functions that satisfy these conditions; the choice of $Q(\theta|\theta^{(t)})$ will depend on the specific problem being solved. The process of finding the $Q$ function is usually referred to as the minorization step.

Given a minorizing (or surrogate) function $Q(\theta|\theta^{(t)})$, we define the next estimate in the sequence as:

\begin{align*}
\theta^{(t+1)}=\arg\max_{\theta\in\Theta}Q(\theta|\theta^{(t)}).
\end{align*}

That is, we choose the value of $\theta$ that maximizes the lower bound $Q(\theta|\theta^{(t)})$. This step is known as the maximization step.

The iteration continues until convergence, typically when the change in the estimate between iterations falls below a predefined tolerance level.

\subsubsection{Properties}

MM algorithms have several useful properties that make them well-suited for a variety of optimization problems.

\paragraph{Convergence}

One of the most important properties of an optimization algorithm is convergence. MM algorithms have been shown to converge under certain conditions. Specifically, if $l(\theta)$ is a concave function and the surrogate function $Q(\theta|\theta^{(t)})$ satisfies the following conditions:

\begin{enumerate}
\item $Q(\theta^{(t)}|\theta^{(t)})=l(\theta^{(t)})$
\item $Q(\theta|\theta^{(t)})\le l(\theta)$ for all $\theta\in\Theta$
\item $\theta^{(t+1)}=\arg\max_\theta Q(\theta|\theta^{(t)})$
\end{enumerate}

then the MM algorithm is guaranteed to converge to a stationary point of $l(\theta)$.

\paragraph{Monotonicity}

Another important property of optimization algorithms is monotonicity, which means that the objective function value improves or stays the same at each iteration. MM algorithms are not necessarily monotonic since there is another type of minorization-maximization algorithm (usually referred to as generalized minorization-maximization algorithm) which only requires the $Q$ function to be smaller than the $l$ function at any point $\theta\in\Theta$ yet without requiring the $Q$ function and the $l$ function to be equal at point $\theta^{(t)}$, but under certain conditions, they can be monotonic.

Specifically, if the surrogate function $Q(\theta|\theta^{(t)})$ satisfies the following condition:

\begin{itemize}
\item $Q(\theta^{(t)}|\theta^{(t)})=l(\theta^{(t)})$
\end{itemize}

then the MM algorithm is monotonic, meaning that $l(\theta^{(t+1)})\geq l(\theta^{(t)})$ for all $t$.

\paragraph{Stability}

Stability is another important property of optimization algorithms. A stable algorithm is one that does not produce wildly varying solutions for small changes in the input or parameters. MM algorithms are generally stable, but they can exhibit instability if the surrogate function is not well-designed.

To ensure stability, the surrogate function $Q(\theta|\theta^{(t)})$ should satisfy the following condition:

\begin{itemize}
\item $|Q(\theta|\theta^{(t)})-Q(\theta|\theta^{(t')})|\leq C|\theta^{(t)}-\theta^{(t')}|$
\end{itemize}

where $C$ is a constant that depends on the choice of surrogate function and $\theta^{(t')}$ is a nearby point to $\theta^{(t)}$. This condition ensures that small changes in the input or parameters result in small changes in the surrogate function.

In summary, minorization-maximization (MM) algorithms are a class of iterative optimization methods that are useful for maximizing difficult objective functions. They are guaranteed to converge to a stationary point under certain conditions and can be monotonic under additional conditions. Furthermore, they are generally stable, but stability depends on the choice of surrogate function.

\subsection{SeLF algorithm}

\subsubsection{Definition}

The paper \cite{tian2025second} proposes a novel minorization-maximization (MM) method known as the \textit{second-derivative lower-bound function} (SeLF) algorithm, which is a general approach for iteratively computing the maximum likelihood estimate (MLE) $\hat{\theta}$ of the parameter $\theta$ in a one-dimensional target function of interest, typically the marginal likelihood function $l(\theta)$.

The SeLF algorithm consists of two steps per iteration: a second-derivative lower-bound function step (SeLF-step), which is a special type of minorization step, and a maximization step (M-step).

Let $l(\theta)$ be a one-dimensional target function of interest, such as the log-likelihood function of the parameter $\theta$, and assume that the first-order derivative $l'(\theta)$ and the second-order derivative $l''(\theta)$ exist. We seek to obtain the MLE of $\theta$ such that $\hat{\theta}=\arg\max_{\theta\in\Theta}l(\theta)$, where $\Theta\subseteq\mathbb{R}\triangleq(-\infty,\infty)$ is the parameter space.

Suppose that there exists a function $b(\theta)$ where $\theta\in\Theta$ such that $l''(\theta)\ge b(\theta)$ for all $\theta\in\Theta$, where $b(\theta)$ is called the \textit{second-derivative lower-bound function} (SeLF) for $l(\theta)$. To construct a minorizing function based on the SeLF $b(\theta)$, we first define
$$
B(\theta|\theta^{(t)})\triangleq\int_{\theta^{(t)}}^{\theta}b(z){\rm d}z
$$
and then we introduce the following auxiliary function
$$
h(\theta|\theta^{(t)})\triangleq l(\theta)-\int_{\theta^{(t)}}^{\theta}B(z|\theta^{(t)}){\rm d}z.
$$
As a result,
\begin{align*}
    h'(\theta|\theta^{(t)})=&l'(\theta)-B(\theta|\theta^{(t)}),\\
    h''(\theta|\theta^{(t)})=&l''(\theta)-b(\theta)\ge0,\\
    h(\theta^{(t)}|\theta^{(t)})=&l(\theta^{(t)}),\\
    h'(\theta^{(t)}|\theta^{(t)})=&l'(\theta^{(t)})-B(\theta^{(t)}|\theta^{(t)})=l'(\theta^{(t)}).
\end{align*}
Since $h''(\theta|\theta^{(t)})\ge0$, indicating that $h(\theta)$ is convex, by the Taylor formula from analysis, we can see that for some specific $\theta^*$ between $\theta$ and $\theta^{(t)}$
\begin{align*}
    h(\theta|\theta^{(t)})&=h(\theta^{(t)}|\theta^{(t)})+h'(\theta^{(t)}|\theta^{(t)})(\theta-\theta^{(t)})+\frac{1}{2}h''(\theta^*)(\theta-\theta^*)^2\\
    &\ge h(\theta^{(t)}|\theta^{(t)})+h'(\theta^{(t)}|\theta^{(t)})(\theta-\theta^{(t)})
\end{align*}
where the equality holds iff $\theta=\theta^{(t)}$. Therefore, we obtain
\begin{align*}
    l(\theta)&=h(\theta|\theta^{(t)})+\int_{\theta^{(t)}}^{\theta}B(z|\theta^{(t)}){\rm d}z\\
    &\ge h(\theta^{(t)}|\theta^{(t)})+h'(\theta^{(t)}|\theta^{(t)})(\theta-\theta^{(t)})+\int_{\theta^{(t)}}^{\theta}B(z|\theta^{(t)}){\rm d}z\\
    &=l(\theta^{(t)})+l'(\theta^{(t)})(\theta-\theta^{(t)})+\int_{\theta^{(t)}}^{\theta}B(z|\theta^{(t)}){\rm d}z\\
    &=l'(\theta^{(t)})\theta+\int_{\theta^{(t)}}^{\theta}B(z|\theta^{(t)}){\rm d}z+c^{(t)}
\end{align*}
which minorizes $l(\theta)$ at $\theta=\theta^{(t)}$ where $c^{(t)}=l(\theta^{(t)})-l'(\theta^{(t)})\theta^{(t)}$ is a constant free from $\theta$. So we can define $Q(\theta|\theta^{(t)})\triangleq l'(\theta^{(t)})\theta+\int_{\theta^{(t)}}^{\theta}B(z|\theta^{(t)}){\rm d}z+c^{(t)}$ as the surrogate function of the MM algorithm.

According to the M-step of minorization-maximization algorithm, letting ${\rm d}Q(\theta|\theta^{(t)})/{\rm d}\theta=0$, the maximizer of the $Q(\theta|\theta^{(t)})$ function is
\begin{align*}
    \theta^{(t+1)}&=\arg\max_{\theta\in\Theta}Q(\theta|\theta^{(t)})\\
    &=\arg\max_{\theta\in\Theta}\left[l'(\theta^{(t)})\theta+\int_{\theta^{(t)}}^{\theta}B(z|\theta^{(t)}){\rm d}z\right]\\
    &={\rm sol}\left\{\theta\in\Theta:l'(\theta^{(t)})+B(\theta|\theta^{(t)})=0\right\}\\
    &={\rm sol}\left\{\theta\in\Theta:l'(\theta^{(t)})+\int_{\theta^{(t)}}^{\theta}b(z){\rm d}z=0\right\}
\end{align*}
where $\theta^{(t)}\in\Theta$.

\subsubsection{Properties}

The SeLF algorithm has several useful properties. It not only inherits all the excellent characteristics of the MM algorithm but also possesses more specific convergence and stability.

Several important properties are described by the paper \cite{tian2025second} as follows.

\paragraph{Convergence}

\begin{definition}[Weakly Stable Convergence\cite{tian2025second}]
    A sequence $\{\theta^{(t)}\}_{t=0}^\infty$ is said to \textit{weakly stably converge} to a fixed point $\theta^*$ if the distance between $\theta^{(t)}$ and $\theta^*$ decreases at each iteration, i.e., $|\theta^{(t+1)}-\theta^*|<|\theta^{(t)}-\theta^*|\ \forall\ t\in\mathbb{N}$.
\end{definition}
    
\begin{definition}[Strongly Stable Convergence\cite{tian2025second}]
    A sequence $\{\theta^{(t)}\}_{t=0}^\infty$ is said to \textit{strongly stably converge} to a fixed point $\theta^*$ if it approaches $\theta^*$ in a strictly monotonically increasing or decreasing manner, i.e., $\theta^{(0)}<\theta^{(1)}<\cdots<\theta^{(t)}<\cdots\le\theta^*$ or $\theta^*\le\cdots<\theta^{(t)}<\cdots<\theta^{(1)}<\theta^{(0)}$.
\end{definition}
    
An algorithm is said to have \textit{weakly (or strongly) stable convergence} if the sequence $\{\theta^{(t)}\}_{t=0}^\infty$ generated by the algorithm weakly (or strongly) stably converges to a fixed point $\theta^*$. It is worth noting that strong stable convergence implies weak stable convergence.

The paper \cite{tian2025second} established the following theorem:

\begin{theorem}[Strongly Stable Convergence\cite{tian2025second}]\label{strongly_stable_convergence}
Let $\hat{\theta}$ denote the MLE of $\theta$, and let $b(\theta)$ be a SeLF for $l(\theta)$ in $\Theta$. Then, for any initial value $\theta^{(0)}\in\Theta$, the sequence $\{\theta^{(t+1)}\}_{t=0}^\infty$ generated by the SeLF algorithm strongly stably converges to $\hat{\theta}$.
\end{theorem}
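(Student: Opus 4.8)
The plan is to exploit the MM structure already established in the excerpt: the M-step characterizes $\theta^{(t+1)}$ as the unique root (in $\theta$) of
$G(\theta\mid\theta^{(t)})\triangleq l'(\theta^{(t)})+\int_{\theta^{(t)}}^{\theta}b(z)\,{\rm d}z$.
Since the SeLF $b$ is negative on $\Theta$, the map $\theta\mapsto G(\theta\mid\theta^{(t)})$ is continuous and strictly decreasing in its first argument, so this root is well-defined and the iteration map $\theta^{(t)}\mapsto\theta^{(t+1)}$ is continuous. I would record the two anchor values $G(\theta^{(t)}\mid\theta^{(t)})=l'(\theta^{(t)})$ and, crucially, $G(\hat\theta\mid\theta^{(t)})=l'(\theta^{(t)})+\int_{\theta^{(t)}}^{\hat\theta}b(z)\,{\rm d}z$.

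First I would prove a \emph{no-overshoot} step. Assume without loss of generality $\theta^{(t)}<\hat\theta$; by unimodality of $l$ (the MLE $\hat\theta$ being the unique stationary point, with $l'>0$ to its left) we have $l'(\theta^{(t)})>0$, hence $G(\theta^{(t)}\mid\theta^{(t)})>0$. On the other hand, because $b(z)\le l''(z)$ pointwise and $\theta^{(t)}<\hat\theta$, integrating gives $\int_{\theta^{(t)}}^{\hat\theta}b(z)\,{\rm d}z\le\int_{\theta^{(t)}}^{\hat\theta}l''(z)\,{\rm d}z=l'(\hat\theta)-l'(\theta^{(t)})=-l'(\theta^{(t)})$, so $G(\hat\theta\mid\theta^{(t)})\le 0$. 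By the intermediate value theorem together with the strict monotonicity of $G(\cdot\mid\theta^{(t)})$, its root obeys $\theta^{(t)}<\theta^{(t+1)}\le\hat\theta$. The mirror computation handles $\theta^{(t)}>\hat\theta$ (yielding $\hat\theta\le\theta^{(t+1)}<\theta^{(t)}$), and $\theta^{(t)}=\hat\theta$ is a fixed point. So the iterates move strictly monotonically toward $\hat\theta$ and never cross it.

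Then I would finish by monotone convergence. A sequence that is strictly increasing (resp. decreasing) and bounded above (resp. below) by $\hat\theta$ converges to a limit $\theta^{\infty}$ with $\theta^{(0)}<\theta^{(1)}<\cdots\le\theta^{\infty}\le\hat\theta$ (or the reflected chain), which is precisely the strong-stable-convergence pattern. Passing to the limit in the continuous fixed-point relation $l'(\theta^{(t)})+B(\theta^{(t+1)}\mid\theta^{(t)})=0$ gives $l'(\theta^{\infty})=0$, whence $\theta^{\infty}=\hat\theta$ by uniqueness of the stationary point. If $\theta^{(t+1)}=\hat\theta$ already holds at some finite $t$, the iteration is stationary thereafter, which the definition still permits.

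I expect the main obstacle to be the no-overshoot step: extracting $G(\hat\theta\mid\theta^{(t)})\le 0$ from the defining inequality $l''\ge b$ of the SeLF and combining it with the monotonicity of $G(\cdot\mid\theta^{(t)})$ to trap $\theta^{(t+1)}$ strictly between $\theta^{(t)}$ and $\hat\theta$; the rest is bookkeeping. A secondary point needing care is the well-definedness and continuity of the M-step map, which relies on $b<0$ on $\Theta$ (equivalently, strict monotonicity of $G$ in its first argument), and on making explicit the standing unimodality assumption on $l$ that identifies the common limit of the iterates with the MLE $\hat\theta$.
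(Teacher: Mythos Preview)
The paper does not actually contain a proof of this theorem: it is quoted from \cite{tian2023self} and the text merely says ``The paper \cite{tian2023self} established the following theorem'' and later ``The proof of Theorem~\ref{strongly_stable_convergence} in the paper \cite{tian2023self} also guarantees the stability property.'' The nearest thing to an in-paper argument is indirect: Proposition~\ref{equivalence} shows that SeLF is the FLB instance of the US algorithm (via Theorem~\ref{flb_function_method}), so Theorem~\ref{strongly_stable_convergence} would follow from Theorem~\ref{us_strongly_stable_convergence}; but that result is likewise only cited, not proved, here.

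Your direct argument is sound and is essentially the natural proof. The key computation---that $G(\hat\theta\mid\theta^{(t)})=l'(\theta^{(t)})+\int_{\theta^{(t)}}^{\hat\theta}b\le l'(\theta^{(t)})+\int_{\theta^{(t)}}^{\hat\theta}l''=0$ when $\theta^{(t)}<\hat\theta$---is exactly the ``$U$-function'' inequality of Theorem~\ref{flb_function_method} evaluated at $\hat\theta$, so your no-overshoot step is the same mechanism the paper packages abstractly. Your identification of the two side conditions (strict negativity of $b$ so that $G(\cdot\mid\theta^{(t)})$ is strictly decreasing and the M-step is well posed; unimodality of $l$ so that $l'(\theta^{(t)})$ has the correct sign and the limiting stationary point is $\hat\theta$) is accurate: these are implicit standing hypotheses in the SeLF/US framework rather than consequences of the bare statement, and you are right to make them explicit.
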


\paragraph{Stability}

The proof of Theorem \ref{strongly_stable_convergence} in the paper \cite{tian2025second} also guarantees the stability property.

In fact, Theorem \ref{strongly_stable_convergence} establishes that the SeLF algorithm possesses two desirable properties: (\romannumeral1) strongly stable convergence to the MLE $\hat{\theta}$; (\romannumeral2) independence from any initial values in $\Theta$, which sets it apart from the Newton algorithm.

\section{Upper-crossing/Solution Algorithm}

\subsection{Introduction}

In \citeauthor{li2022upper}'s recent work \cite{li2022upper}, \citeauthor{li2022upper} introduces a new and versatile root-finding method, named the \textit{Upper-crossing/Solution} (US) algorithm, which belongs to the class of non-bracketing (or open domain) methods. The US algorithm offers a general approach to iteratively determine the unique root $\theta^*$ of a non-linear equation $g(\theta)=0$. Each iteration of the US algorithm consists of two steps: an upper-crossing step (U-step) and a solution step (S-step). During the U-step, the algorithm seeks an upper-crossing function or a $U$-function, denoted by $U(\theta|\theta^{(t)})$, which takes a particular form depending on the $t$-th iteration $\theta^{(t)}$ of $\theta^*$, and is derived using a novel concept called the changing direction inequality. In the subsequent S-step, the US algorithm solves the simple $U$-equation, $U(\theta|\theta^{(t)})=0$, to obtain the explicit solution $\theta^{(t+1)}$.

\subsection{Definition}

\begin{definition}[Changing Direction (CD) Inequalities\cite{li2022upper}]
For two functions $h_1(x)$ and $h_2(x)$ with the same domain $\mathbb{X}$,
$$
h_1(x)\overset{{\rm sgn}(a)}{\ge}h_2(x)\ \Longleftrightarrow\ \begin{cases}
    h_1(x)\ge h_2(x),&{\rm if}\ a>0,\\
    h_1(x)=h_2(x),&{\rm if}\ a=0,\\
    h_1(x)\le h_2(x),&{\rm if}\ a<0,
\end{cases}
$$
and
$$
h_1(x)\overset{{\rm sgn}(a)}{\le}h_2(x)\ \Longleftrightarrow\ \begin{cases}
    h_1(x)\le h_2(x),&{\rm if}\ a>0,\\
    h_1(x)=h_2(x),&{\rm if}\ a=0,\\
    h_1(x)\ge h_2(x),&{\rm if}\ a<0.
\end{cases}
$$
\end{definition}

Suppose that solving directly for the unique root $\theta^*$ of the non-linear equation $g(\theta)=0,\ \theta\in\Theta\subset\mathbb{R}$ is extremely challenging. Without loss of generality, we assume that
$$
g(\theta)>0,\ \forall\ \theta<\theta^*\ {\rm and}\ g(\theta)<0,\ \forall\ \theta>\theta^*.
$$
Alternatively, we can simplify the assumption with CD inequalities as $g(\theta)\overset{{\rm sgn}(\theta-\theta^*)}{\le}0$ for all $\theta\in\Theta$. If $g(\theta)<0$ when $\theta<\theta^*$, then we can multiply both sides of the nonlinear equation $g(\theta)=0,\ \theta\in\Theta\subset\mathbb{R}$ by $-1$ to obtain a new equation $-g(\theta)=0$, which satisfies the above assumption.

The idea behind the US algorithm is as follows: Let $\theta^{(t)}$ denote the $t$-th iteration of $\theta^*$ and $U(\theta|\theta^{(t)})$ denote a real-valued function of $\theta$ whose form depends on $\theta^{(t)}$. The function $U(\theta|\theta^{(t)})$ is called a $U$-function for $g(\theta)$ at $\theta=\theta^{(t)}$ if
$$
g(\theta)\overset{{\rm sgn}(\theta-\theta^{(t)})}{\ge}U(\theta|\theta^{(t)}),\ \forall\ \theta\in\Theta
$$
where $\theta^{(t)}\in\Theta$.

The conditions indicate that the $U(\theta|\theta^{(t)})$ function dominates the objective function $g(\theta)$ at $\theta=\theta^{(t)}$ for all $\theta\le\theta^{(t)}$ and the $U(\theta|\theta^{(t)})$ function underestimates $g(\theta)$ at $\theta=\theta^{(t)}$ for all $\theta\ge\theta^{(t)}$. Using the definition of the $U$-function, we conclude that $g(\theta)$ is a $U$-function for $0$ at $\theta=\theta^*$.

Some basic properties of $U$-functions are given by the paper \cite{li2022upper} as follows.

\begin{proposition}[Basic Properties of $U$-functions\cite{li2022upper}]
    \quad
    \begin{enumerate}
        \item[(a)] Let $g(\theta)$ be defined in $\Theta$, then $g(\theta)$ is a $U$-function for itself at any interior of $\Theta$;
        \item[(b)] If $U_1(\theta|\theta^{(t)})$ and $U_2(\theta|\theta^{(t)})$ are two $U$-functions for $g(\theta)$ at $\theta=\theta^{(t)}$, then $[U_1(\theta|\theta^{(t)})+U_2(\theta|\theta^{(t)})]/2$ is also a $U$-function for $g(\theta)$ at $\theta=\theta^{(t)}$;
        \item[(c)] If $g_1(\theta)$ and $g_2(\theta)$ are two functions defined on $\Theta$ and $U_1(\theta|\theta^{(t)})$ is a $U$-function for $g_1(\theta)$ at $\theta=\theta^{(t)}$, then $U_1(\theta|\theta^{(t)})+g_2(\theta)$ is a $U$-function for $g_1(\theta)+g_2(\theta)$ at $\theta=\theta^{(t)}$.
    \end{enumerate}
\end{proposition}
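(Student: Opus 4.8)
The plan is to reduce each of the three claims to the definition of a $U$-function, namely to the changing-direction inequality $g(\theta)\overset{{\rm sgn}(\theta-\theta^{(t)})}{\ge}U(\theta|\theta^{(t)})$, and then to verify that inequality by splitting into the three regimes $\theta>\theta^{(t)}$, $\theta=\theta^{(t)}$, and $\theta<\theta^{(t)}$ dictated by the definition of the CD inequality. In every regime the CD inequality unwinds to an ordinary inequality (or equality) between real numbers, so the verification will be elementary; nothing beyond the monotonicity of addition is required.

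For part (a), I would take $U(\theta|\theta^{(t)})\triangleq g(\theta)$ itself. Then $g(\theta)=U(\theta|\theta^{(t)})$ for every $\theta\in\Theta$, which simultaneously satisfies all three branches ($\ge$, $=$, and $\le$) of the CD inequality, so $g$ is indeed a $U$-function for itself at any $\theta^{(t)}$ in the interior of $\Theta$; the interior hypothesis serves only to guarantee $\theta^{(t)}\in\Theta$ and plays no further role.

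For part (b), I would invoke the two hypotheses $g(\theta)\overset{{\rm sgn}(\theta-\theta^{(t)})}{\ge}U_1(\theta|\theta^{(t)})$ and $g(\theta)\overset{{\rm sgn}(\theta-\theta^{(t)})}{\ge}U_2(\theta|\theta^{(t)})$ and run the three-case split: when $\theta>\theta^{(t)}$, adding $g\ge U_1$ and $g\ge U_2$ gives $2g\ge U_1+U_2$, hence $g\ge(U_1+U_2)/2$; when $\theta=\theta^{(t)}$, the equalities $g=U_1=U_2$ give $g=(U_1+U_2)/2$; and when $\theta<\theta^{(t)}$, adding the reversed inequalities $g\le U_1$ and $g\le U_2$ gives $g\le(U_1+U_2)/2$. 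These three statements are exactly the assertion that $[U_1(\theta|\theta^{(t)})+U_2(\theta|\theta^{(t)})]/2$ is a $U$-function for $g(\theta)$ at $\theta=\theta^{(t)}$. For part (c), the key observation is that adding a fixed real number to both sides of any of the relations $\ge$, $=$, $\le$ preserves it; applying this with summand $g_2(\theta)$ to each branch of $g_1(\theta)\overset{{\rm sgn}(\theta-\theta^{(t)})}{\ge}U_1(\theta|\theta^{(t)})$ yields $g_1(\theta)+g_2(\theta)\overset{{\rm sgn}(\theta-\theta^{(t)})}{\ge}U_1(\theta|\theta^{(t)})+g_2(\theta)$ directly.

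The only obstacle, and it is a mild one, is bookkeeping the direction of the CD inequality across the three regimes — in particular remembering that below $\theta^{(t)}$ the inequality reverses, so that in (b) one must add the $\le$ versions there rather than the $\ge$ versions. Once the case split is laid out, each line is a one-step arithmetic fact, and no continuity, differentiability, or other structural assumption on $g$, $g_1$, $g_2$, $U_1$, or $U_2$ is needed.
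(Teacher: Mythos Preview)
Your argument is correct: each of (a)--(c) reduces immediately to the definition of a $U$-function via the three-regime case split on ${\rm sgn}(\theta-\theta^{(t)})$, and the verifications you sketch are all valid. Note, however, that the paper itself does not supply a proof of this proposition; it merely states the result and attributes it to \cite{li2022upper}, so there is no in-paper proof to compare your approach against. Your direct unwinding of the CD inequality is the natural (and essentially only) way to handle these basic closure properties, and nothing in your write-up needs adjustment.
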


If a $U$-function can be found, one needs only solve the simple $U$-equation: $U(\theta|\theta^{(t)})=0$ to obtain its solution $\theta^{(t+1)}$. This is denoted as
$$
\theta^{(t+1)}={\rm sol}\{\theta\in\Theta:U(\theta|\theta^{(t)})=0\}
$$
where $\theta^{(t)}\in\Theta$.

Therefore, the US algorithm is a general principle for iteratively seeking the root $\theta^*$ with strongly stable convergence. Each iteration consists of an upper-crossing step (\textbf{U-step}) and a solution step (\textbf{S-step}), where
\begin{table*}[h!]
    \begin{center}
        \begin{tabularx}{\textwidth}{lXXX}
            \textbf{U-step:}&Find a $U$-function $U(\theta|\theta^{(t)})$ satisfying $g(\theta)\overset{{\rm sgn}(\theta-\theta^{(t)})}{\ge}U(\theta|\theta^{(t)}),\ \forall\ \theta\in\Theta$;\\
            \textbf{S-step:}&Solve the $U$-equation to obtain its root $\theta^{(t+1)}$ as showed by $\theta^{(t+1)}={\rm sol}\{\theta\in\Theta:U(\theta|\theta^{(t)})=0\}$.
        \end{tabularx}
    \end{center}
\end{table*}
\FloatBarrier

Typically, the $U$-equation $U(\theta|\theta^{(t)})=0$ has an analytic solution, such as being a polynomial with degree less than 5 (or even a linear equation). Hence, the US method involves iteratively solving a sequence of tractable surrogate equations $U(\theta|\theta^{(t)})=0$, instead of the intractable original nonlinear equation $g(\theta)=0$.

\subsection{Properties}

The US algorithm, as a new algorithm, is not a special case of the MM algorithm. On the contrary, its concept is more open than the MM algorithm. However, this does not prevent it from having very good convergence and stability. The paper \cite{li2022upper} proves several useful properties, which help us see the power of the US algorithm.

\paragraph{Convergence}

\begin{theorem}[Strongly Stable Convergence\cite{li2022upper}]\label{us_strongly_stable_convergence}
    Let $\theta^*$ be the unique root of the equation $g(\theta)=0,\ \theta\in\Theta\subset\mathbb{R}$, and let $U(\theta|\theta^{(t)})$ be a $U$-function for $g(\theta)$ at $\theta=\theta^{(t)}$ in $\Theta$. Then, for any initial value $\theta^{(0)}$ in $\Theta$, the sequence $\{\theta^{(t+1)}\}_{t=0}^\infty$ defined by $\theta^{(t+1)}={\rm sol}\{\theta\in\Theta:U(\theta|\theta^{(t)})=0\}$ strongly and stably converges to $\theta^*$.
\end{theorem}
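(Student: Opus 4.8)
\emph{Proof proposal.} The plan is to exploit the single defining inequality of the $U$-function, $g(\theta)\overset{{\rm sgn}(\theta-\theta^{(t)})}{\ge}U(\theta|\theta^{(t)})$ on $\Theta$, together with the standing sign pattern $g(\theta)>0$ for $\theta<\theta^*$ and $g(\theta)<0$ for $\theta>\theta^*$, to pin down exactly where the next iterate $\theta^{(t+1)}$ can lie. The first observation is the diagonal identity: setting $\theta=\theta^{(t)}$ in the CD inequality, where ${\rm sgn}(0)=0$, collapses it to the equality $U(\theta^{(t)}|\theta^{(t)})=g(\theta^{(t)})$. I would then split into the trivial case $\theta^{(0)}=\theta^*$ (there $g(\theta^{(0)})=0$ forces $\theta^{(1)}=\theta^{(0)}=\theta^*$ and the sequence is stationary) and the two symmetric nontrivial cases $\theta^{(0)}<\theta^*$ and $\theta^{(0)}>\theta^*$; I will carry out the first, the second following by reflecting every inequality.

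For the inductive step assume $\theta^{(t)}<\theta^*$, so that $g(\theta^{(t)})>0$ and hence $U(\theta^{(t)}|\theta^{(t)})>0$. If $\theta\in\Theta$ with $\theta<\theta^{(t)}$, then ${\rm sgn}(\theta-\theta^{(t)})<0$, so the CD inequality reads $U(\theta|\theta^{(t)})\ge g(\theta)>0$ (using $\theta<\theta^{(t)}<\theta^*$); thus $U(\cdot|\theta^{(t)})$ has no zero on $\Theta\cap(-\infty,\theta^{(t)}]$, giving $\theta^{(t+1)}>\theta^{(t)}$. If $\theta>\theta^*$, then $\theta>\theta^{(t)}$ as well, so ${\rm sgn}(\theta-\theta^{(t)})>0$ and the CD inequality reads $U(\theta|\theta^{(t)})\le g(\theta)<0$; thus $U(\cdot|\theta^{(t)})$ has no zero on $\Theta\cap(\theta^*,\infty)$, giving $\theta^{(t+1)}\le\theta^*$. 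Evaluating at $\theta=\theta^*$ gives $U(\theta^*|\theta^{(t)})\le g(\theta^*)=0$, so (taking $\Theta$ to be an interval and $U(\cdot|\theta^{(t)})$ continuous, as the framework implicitly requires for the $U$-equation to have a solution) the intermediate value theorem produces a zero in $(\theta^{(t)},\theta^*]$; hence $\theta^{(t+1)}$ is well defined with $\theta^{(t)}<\theta^{(t+1)}\le\theta^*$. By induction the whole sequence is strictly increasing and bounded above by $\theta^*$, which is precisely strong stable convergence toward $\theta^*$ in the sense of the definition.

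It then remains to identify the limit. Being monotone and bounded, $\{\theta^{(t)}\}$ converges to some $L\le\theta^*$, and consequently $\{\theta^{(t+1)}\}$ converges to $L$ as well. Passing to the limit in $U(\theta^{(t+1)}|\theta^{(t)})=0$ — here invoking joint continuity of the $U$-function in its two arguments, equivalently continuity of the update map $\theta^{(t)}\mapsto\theta^{(t+1)}$ — yields $U(L|L)=0$, and by the diagonal identity $U(L|L)=g(L)$ this forces $g(L)=0$. Since $\theta^*$ is the unique root of $g$, we get $L=\theta^*$. The case $\theta^{(0)}>\theta^*$ is identical with signs reversed: $g(\theta^{(t)})<0$ forces $\theta^{(t+1)}<\theta^{(t)}$ and $\theta^{(t+1)}\ge\theta^*$, giving a strictly decreasing sequence bounded below by $\theta^*$ and converging to it.

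I expect the limiting argument to be the main obstacle: monotonicity and boundedness alone only force $\theta^{(t+1)}-\theta^{(t)}\to0$, and to upgrade "$L$ is a fixed point of the iteration" to "$g(L)=0$" one genuinely needs a continuity/regularity hypothesis on the $U$-function. Minor care is also required for the edge case $\theta^{(t+1)}=\theta^*$, where the sequence becomes stationary and "strictly increasing" must be read as "strictly increasing until it reaches $\theta^*$", and for the mild standing assumption that $\Theta$ is an interval so that $[\theta^{(t)},\theta^*]\subseteq\Theta$ and the intermediate value theorem applies.
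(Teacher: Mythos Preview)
The paper does not actually prove this theorem: it is stated with a citation to \cite{li2022upper} and the proof is explicitly deferred to that reference, so there is no in-paper argument to compare against. That said, your proposal is the natural and standard argument for this kind of result and is essentially correct: the diagonal identity $U(\theta^{(t)}|\theta^{(t)})=g(\theta^{(t)})$, the sign analysis showing $U(\cdot|\theta^{(t)})$ cannot vanish on $(-\infty,\theta^{(t)}]$ or on $(\theta^*,\infty)$, and the resulting monotone trapping $\theta^{(t)}<\theta^{(t+1)}\le\theta^*$ are exactly the right ingredients, and you have already flagged the only genuine soft spots (continuity of $U$ for the IVT and for passing to the limit, $\Theta$ an interval, and the edge case where the sequence hits $\theta^*$ in finitely many steps).
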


\paragraph{Stablility}

The proof of Theorem \ref{us_strongly_stable_convergence} in the paper \cite{li2022upper} also guarantees the stability property.

Moreover, Theorem \ref{us_strongly_stable_convergence} shows that the US algorithm has two desirable properties: (\romannumeral1) strong stable convergence to the root $\theta^*$; (\romannumeral2) independence from any initial values in $\Theta$, which sets it apart from the Newton algorithm.

\subsection{Relationship with SeLF Algorithm}

In fact, we can observe that the SeLF algorithm can be viewed as a special case of the US algorithm if we consider that finding the maximum likelihood estimate is equivalent to finding the zero of the log-likelihood function or the root of the equation $l'(\theta)=0$. The paper \cite{li2022upper} presents four methods for constructing $U$-functions: the FLB function method, the SLUB constants method, the TLB method, and the fixed-block method. To prove that the SeLF algorithm is a special case of the US algorithm, we focus only on the FLB function method in this discussion.

\begin{theorem}[The FLB Function Method\cite{li2022upper}]\label{flb_function_method}
    Suppose the first-order derivative $g'(\cdot)$ exists and it is bounded by some function $b(\cdot)$; i.e.,
    $$
    g'(\theta)\ge b(\theta),\ \forall\ \theta\in\Theta,
    $$
    where $b(\theta)$ is called the first-derivative lower bound (FLB) function. Then
    $$
    U(\theta|\theta^{(t)})\triangleq g(\theta^{(t)})+\int_{\theta^{(t)}}^{\theta}b(z){\rm d}z,\ \forall\ \theta,\theta^{(t)}\in\Theta,
    $$
    is a $U$-function for $g(\theta)$ at $\theta=\theta^{(t)}$.
\end{theorem}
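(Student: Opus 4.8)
The plan is to verify directly that the proposed function satisfies the defining changing-direction inequality $g(\theta)\overset{{\rm sgn}(\theta-\theta^{(t)})}{\ge}U(\theta|\theta^{(t)})$ for every $\theta\in\Theta$, by reducing it to a monotonicity statement. First I would dispose of the case $\theta=\theta^{(t)}$: substituting into the definition gives $U(\theta^{(t)}|\theta^{(t)})=g(\theta^{(t)})+\int_{\theta^{(t)}}^{\theta^{(t)}}b(z)\,{\rm d}z=g(\theta^{(t)})$, so equality holds, which is exactly what the CD inequality demands on its $a=0$ branch.

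Next I would introduce the difference function $D(\theta)\triangleq g(\theta)-U(\theta|\theta^{(t)})=g(\theta)-g(\theta^{(t)})-\int_{\theta^{(t)}}^{\theta}b(z)\,{\rm d}z$, which satisfies $D(\theta^{(t)})=0$. Differentiating and applying the fundamental theorem of calculus to the integral term yields $D'(\theta)=g'(\theta)-b(\theta)\ge 0$ for all $\theta\in\Theta$, where the inequality is precisely the FLB hypothesis. Hence $D$ is monotonically non-decreasing on $\Theta$.

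From monotonicity I would then read off the two remaining branches of the CD inequality. For $\theta>\theta^{(t)}$ we get $D(\theta)\ge D(\theta^{(t)})=0$, i.e.\ $g(\theta)\ge U(\theta|\theta^{(t)})$; for $\theta<\theta^{(t)}$ we get $D(\theta)\le D(\theta^{(t)})=0$, i.e.\ $g(\theta)\le U(\theta|\theta^{(t)})$. Together with the equality case this is exactly $g(\theta)\overset{{\rm sgn}(\theta-\theta^{(t)})}{\ge}U(\theta|\theta^{(t)})$, so $U(\theta|\theta^{(t)})$ is a $U$-function for $g(\theta)$ at $\theta=\theta^{(t)}$. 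An equivalent route that avoids differentiating $D$ is to write $g(\theta)-g(\theta^{(t)})=\int_{\theta^{(t)}}^{\theta}g'(z)\,{\rm d}z$ and compare integrands directly, keeping track of the orientation of ${\rm d}z$ when $\theta<\theta^{(t)}$ so the inequality flips correctly.

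The argument is essentially routine, so I do not expect a serious obstacle; the only points needing a little care are regularity and integrability — one wants $g'$ (equivalently $b$, since $b\le g'$) integrable on subintervals of $\Theta$ so that the fundamental theorem of calculus applies and the integral defining $U$ is meaningful — and the implicit assumption that $\Theta$ is an interval, which is what lets "non-decreasing on $\Theta$" transfer the value $D(\theta^{(t)})=0$ to all other points. Both are mild and consistent with the standing assumptions of the US framework.
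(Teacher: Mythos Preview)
Your proposal is correct and matches the paper's proof essentially line for line: the paper writes $g(\theta)-U(\theta|\theta^{(t)})=\int_{\theta^{(t)}}^{\theta}[g'(z)-b(z)]\,{\rm d}z$ and reads off the sign in the three cases, which is exactly the ``equivalent route'' you describe at the end. Your primary argument via the monotone difference function $D$ is just the differentiated form of the same computation.
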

\begin{proof}\cite{li2022upper}
    \begin{align*}
        g(\theta)-U(\theta|\theta^{(t)})&=[g(\theta)-g(\theta^{(t)})]-\int_{\theta^{(t)}}^{\theta}b(z){\rm d}z=\int_{\theta^{(t)}}^{\theta}g'(z){\rm d}z-\int_{\theta^{(t)}}^{\theta}b(z){\rm d}z\\
        &=\int_{\theta^{(t)}}^{\theta}[g'(z)-b(z)]{\rm d}z\begin{cases}
            \le0,&{\rm if}\ \theta<\theta^{(t)},\\
            =0,&{\rm if}\ \theta=\theta^{(t)},\\
            \ge0,&{\rm if}\ \theta>\theta^{(t)}
        \end{cases}\\
        \overset{{\rm sgn}(\theta-\theta^{(t)})}&{\ge}0,\ \forall\ \theta,\theta^{(t)}\in\Theta.
    \end{align*}
\end{proof}

\begin{proposition}\label{equivalence}
    The SeLF algorithm can be viewed as a special case of the US algorithm if we consider that finding the maximum likelihood estimate is equivalent to finding the zero of the log-likelihood function or the root of the equation $l'(\theta)=0$.
\end{proposition}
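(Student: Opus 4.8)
The plan is to show that applying the US algorithm to the equation $g(\theta)=l'(\theta)=0$ via the FLB Function Method (Theorem~\ref{flb_function_method}) reproduces exactly the SeLF iteration. First I would set $g(\theta)\triangleq l'(\theta)$ and observe that finding the MLE $\hat\theta$ amounts to solving $g(\theta)=0$; under the usual concavity-type assumptions on $l$ this root is unique, and the sign condition $g(\theta)\overset{{\rm sgn}(\theta-\hat\theta)}{\le}0$ holds because $l$ is increasing before $\hat\theta$ and decreasing after. Then I would note that $g'(\theta)=l''(\theta)$, so the SeLF hypothesis $l''(\theta)\ge b(\theta)$ for all $\theta\in\Theta$ is precisely the statement that $b(\theta)$ is an FLB function for $g(\theta)$ in the sense of Theorem~\ref{flb_function_method}.

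Next I would invoke Theorem~\ref{flb_function_method} directly: it gives that
$$
U(\theta|\theta^{(t)})\triangleq g(\theta^{(t)})+\int_{\theta^{(t)}}^{\theta}b(z){\rm d}z=l'(\theta^{(t)})+\int_{\theta^{(t)}}^{\theta}b(z){\rm d}z
$$
is a $U$-function for $g(\theta)=l'(\theta)$ at $\theta=\theta^{(t)}$. The S-step of the US algorithm then defines
$$
\theta^{(t+1)}={\rm sol}\left\{\theta\in\Theta:l'(\theta^{(t)})+\int_{\theta^{(t)}}^{\theta}b(z){\rm d}z=0\right\},
$$
which I would compare term-by-term with the M-step formula derived for the SeLF algorithm in Section~2.3.1, namely $\theta^{(t+1)}={\rm sol}\{\theta\in\Theta:l'(\theta^{(t)})+\int_{\theta^{(t)}}^{\theta}b(z){\rm d}z=0\}$. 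Since the two update rules are literally identical, the sequences generated coincide for every initial value $\theta^{(0)}\in\Theta$, and I would conclude that the SeLF algorithm is the instance of the US algorithm obtained by the FLB Function Method applied to $g=l'$ with FLB function equal to the SeLF $b$.

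I would also briefly remark on consistency of the convergence claims: Theorem~\ref{us_strongly_stable_convergence} applied to this instance yields strongly stable convergence to $\hat\theta$, which recovers Theorem~\ref{strongly_stable_convergence}, so the identification is coherent with the stated properties of both algorithms. The main obstacle I anticipate is not computational but conceptual: one must check that the structural assumption of the US algorithm --- that $g$ crosses zero with a definite sign pattern at a \emph{unique} root --- is actually met by $g=l'$, which requires that the target function $l$ have a unique stationary point that is its maximizer (guaranteed, e.g., under strict concavity, or more generally whenever a SeLF $b(\theta)$ with the appropriate sign behavior exists). Making this hypothesis explicit, and noting that it is exactly the setting in which Theorem~\ref{strongly_stable_convergence} was stated, is the one point that needs care; the rest of the argument is a direct matching of formulas.
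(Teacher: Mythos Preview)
Your proposal is correct and follows essentially the same route as the paper: set $g(\theta)=l'(\theta)$, recognize that the SeLF condition $l''(\theta)\ge b(\theta)$ is exactly the FLB hypothesis $g'(\theta)\ge b(\theta)$, apply Theorem~\ref{flb_function_method} to obtain the $U$-function $U(\theta|\theta^{(t)})=l'(\theta^{(t)})+\int_{\theta^{(t)}}^{\theta}b(z)\,{\rm d}z$, and observe that the resulting S-step coincides with the SeLF M-step. Your additional care about the sign/uniqueness hypothesis on $g=l'$ and the remark linking Theorems~\ref{us_strongly_stable_convergence} and~\ref{strongly_stable_convergence} go slightly beyond what the paper's own proof spells out, but they are consistent with it and do not change the argument.
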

\begin{proof}
    The SeLF algorithm aims to find the maximum of the log-likelihood function $l(\theta)$ or, equivalently, the zero of the first derivative $l'(\theta)$. Let $g(\theta) = l'(\theta)$ for $\theta \in \Theta$, where $\Theta$ is the parameter space. The SeLF algorithm uses the FLB function method to obtain a lower bound function $b(\theta)$ of $l''(\theta)$ such that $l''(\theta) \geq b(\theta)$ for all $\theta\in\Theta$, which implies $g'(\theta) \geq b(\theta)$ for all $\theta\in\Theta$.
    
    By Theorem \ref{flb_function_method} in the paper \cite{li2022upper}, $U(\theta|\theta^{(t)})\triangleq g(\theta^{(t)})+\int_{\theta^{(t)}}^{\theta}b(z){\rm d}z=l'(\theta^{(t)})+\int_{\theta^{(t)}}^{\theta}b(z){\rm d}z$ is a $U$-function of $g(\theta)$ in $\Theta$. Therefore, each iteration of the SeLF algorithm is $\theta^{(t+1)}={\rm sol}\left\{\theta\in\Theta:l'(\theta^{(t)})+\int_{\theta^{(t)}}^{\theta}b(z){\rm d}z=0\right\}$.
    
    Similarly, the US algorithm aims to find the root of the function $g(\theta)=l'(\theta)$ with a $U$-function $U(\theta|\theta^{(t)})$. Each iteration of the US algorithm is $\theta^{(t+1)}={\rm sol}
    \{\theta\in\Theta:U(\theta|\theta^{(t)})=0\}={\rm sol}\left\{\theta\in\Theta:g(\theta^{(t)})+\int_{\theta^{(t)}}^{\theta}b(z){\rm d}z=0\right\}$.
    
    Since $l'(\theta^{(t)})=g(\theta^{(t)})$, we have that each iteration of the SeLF algorithm and the US algorithm is the same. Therefore, the SeLF algorithm can be viewed as a special case of the US algorithm if we consider that finding the maximum likelihood estimate is equivalent to finding the zero of the log-likelihood function or the root of the equation $l'(\theta)=0$.
\end{proof}

\section{MLE in Generalized Gamma Distribution}

In this section, we will utilize the SeLF algorithm to estimate the parameters of the generalized Gamma distribution. As we have established some kind of equivalence between the US algorithm and the SeLF algorithm through Proposition \ref{equivalence}, we can directly apply the simpler SeLF algorithm to achieve our goal without the need to design a separate US algorithm. Essentially, when we use the SeLF algorithm to estimate the parameters of the generalized Gamma distribution, we are utilizing the US algorithm.

Suppose there are $n$ random variables $\{X_i\}_{i=1}^n$ which are mutually independent and follow the same generalized Gamma distribution with the probability density function defined as $f(x;\alpha,\beta,\gamma)=\frac{\beta^\alpha\gamma}{\Gamma(\alpha/\gamma)}x^{\alpha-1}e^{-(\beta x)^\gamma}$ where $\alpha,\beta,\gamma>0$. Let $\{x_i\}_{i=1}^n$ be a realization of $\{X_i\}_{i=1}^n$. The likelihood function is defined as
$$
L(\alpha,\beta,\gamma)=\prod_{i=1}^nf(x_i;\alpha,\beta,\gamma).
$$
The log-likelihood function is then defined as
\begin{align*}
    l(\alpha,\beta,\gamma)=&\log\prod_{i=1}^nf(x_i;\alpha,\beta,\gamma)\\
    =&\sum_{i=1}^n\log f(x_i;\alpha,\beta,\gamma)\\
    =&\sum_{i=1}^n\log\left[\frac{\beta^\alpha\gamma}{\Gamma(\alpha/\gamma)}x_i^{\alpha-1}e^{-(\beta x_i)^\gamma}\right]\\
    =&\sum_{i=1}^n[\alpha\log\beta+\log\gamma-\log\Gamma(\alpha/\gamma)+(\alpha-1)\log x_i-(\beta x_i)^\gamma]\\
    =&n[\alpha\log\beta+\log\gamma-\log\Gamma(\alpha/\gamma)]+(\alpha-1)\sum_{i=1}^n\log x_i-\beta^\gamma\sum_{i=1}^nx_i^\gamma
\end{align*}

The first- and second-order partial derivatives of $l(\alpha,\beta,\gamma)$ with respect to $\alpha$ are:
\begin{align*}
    \frac{\partial l(\alpha,\beta,\gamma)}{\partial\alpha}=&n\left[\log\beta-\frac{1}{\gamma}\frac{\Gamma'(\alpha/\gamma)}{\Gamma(\alpha/\gamma)}\right]+\sum_{i=1}^n\log x_i\\
    =&n\left\{\log\beta-\frac{1}{\gamma}\left[-\gamma_0+\sum_{m=0}^\infty\left(\frac{1}{m+1}-\frac{1}{m+\alpha/\gamma}\right)\right]\right\}+\sum_{i=1}^n\log x_i\\
    \frac{\partial^2l(\alpha,\beta,\gamma)}{\partial\alpha^2}=&-\frac{n}{\gamma^2}\sum_{m=0}^\infty\frac{1}{(m+\alpha/\gamma)^2}
\end{align*}
where $\gamma_0=\lim_{n\to\infty}\left(\sum_{m=1}^nm^{-1}-\log n\right)\approx0.5772$ is the Euler-Mascheroni constant.

Since $\alpha,\gamma>0$, $\alpha/\gamma>0$, so
\begin{align*}
    \frac{\partial^2l(\alpha,\beta,\gamma)}{\partial\alpha^2}=&-\frac{n}{\gamma^2}\sum_{m=0}^\infty\frac{1}{(m+\alpha/\gamma)^2}\\
    =&-\frac{n}{\gamma^2}\left(\frac{1}{(\alpha/\gamma)^2}+\sum_{m=1}^\infty\frac{1}{(m+\alpha/\gamma)^2}\right)\\
    \ge&-\frac{n}{\gamma^2}\left(\frac{1}{(\alpha/\gamma)^2}+\sum_{m=1}^\infty\frac{1}{m^2}\right)\\
    =&-\frac{n}{\alpha^2}-\frac{n}{\gamma^2}\sum_{m=1}^{\infty}\frac{1}{m^2}\\
    =&-\frac{n}{\alpha^2}-\frac{\pi^2}{6}\cdot\frac{n}{\gamma^2}\\
    \triangleq&b_\alpha(\alpha,\beta,\gamma).
\end{align*}

Then we can construct a $U$-function of $\frac{\partial l(\alpha,\beta,\gamma)}{\partial\alpha}$ by $b_\alpha(\alpha,\beta,\gamma)$ as follows:
\begin{align*}
    &U_\alpha(\alpha,\beta,\gamma|\alpha^{(t)},\beta^{(t)},\gamma^{(t)})\\
    =&\frac{\partial l(\alpha,\beta,\gamma)}{\partial\alpha}(\alpha^{(t)},\beta^{(t)},\gamma^{(t)})+\int_{\alpha^{(t)}}^\alpha b_\alpha(z,\beta^{(t)},\gamma^{(t)}){\rm d}z\\
    =&n\left\{\log\beta^{(t)}-\frac{1}{\gamma^{(t)}}\left[-\gamma_0+\sum_{m=0}^\infty\left(\frac{1}{m+1}-\frac{1}{m+\alpha^{(t)}/\gamma^{(t)}}\right)\right]\right\}+\sum_{i=1}^n\log x_i\\
    &+\left[\frac{n}{z}-\frac{\pi^2}{6}\cdot\frac{n}{(\gamma^{(t)})^2}z\right]_{\alpha^{(t)}}^\alpha\\
    =&n\left\{\log\beta^{(t)}-\frac{1}{\gamma^{(t)}}\left[-\gamma_0+\sum_{m=0}^\infty\left(\frac{1}{m+1}-\frac{1}{m+\alpha^{(t)}/\gamma^{(t)}}\right)\right]\right\}+\sum_{i=1}^n\log x_i\\
    &-\left[\frac{n}{\alpha^{(t)}}-\frac{\pi^2}{6}\cdot\frac{n\alpha^{(t)}}{(\gamma^{(t)})^2}\right]+\left[\frac{n}{\alpha}-\frac{\pi^2}{6}\cdot\frac{n\alpha}{(\gamma^{(t)})^2}\right]
\end{align*}

Letting $U_\alpha(\alpha,\beta,\gamma|\alpha^{(t)},\beta^{(t)},\gamma^{(t)})=0$, we get
$$
\alpha^{(t+1)}={\rm sol}\left\{\alpha>0:a^{(t)}\alpha^2+b^{(t)}\alpha+c^{(t)}=0\right\}
$$
where $a^{(t)}=-\frac{\pi^2}{6}\cdot\frac{n}{(\gamma^{(t)})^2}$, $b^{(t)}=n\left\{\log\beta^{(t)}-\frac{1}{\gamma^{(t)}}\left[-\gamma_0+\sum_{m=0}^\infty\left(\frac{1}{m+1}-\frac{1}{m+\alpha^{(t)}/\gamma^{(t)}}\right)\right]\right\}+\sum_{i=1}^n\log x_i-\frac{n}{\alpha^{(t)}}+\frac{\pi^2}{6}\cdot\frac{n\alpha^{(t)}}{(\gamma^{(t)})^2}$ and $c^{(t)}=n$.

The first-order partial derivative of $l(\alpha,\beta,\gamma)$ with respect to $\beta$ is
$$
\frac{\partial l(\alpha,\beta,\gamma)}{\partial\beta}=\frac{n\alpha}{\beta}-\gamma\beta^{\gamma-1}\sum_{i=1}^nx_i^\gamma.
$$
Let $\frac{\partial l(\alpha,\beta,\gamma)}{\partial\beta}(\alpha^{(t+1)},\beta,\gamma^{(t)})=0$ and then we can get the explicit expression of the MLE of $\beta$ given $\alpha^{(t+1)}$ and $\gamma^{(t)}$:
$$
\beta^{(t+1)}=\left(\frac{n\alpha^{(t+1)}}{\gamma^{(t)}\sum_{i=1}^nx_i^{\gamma^{(t)}}}\right)^{\frac{1}{\gamma^{(t)}}}.
$$

The first- and second-order partial derivatives of $l(\alpha,\beta,\gamma)$ with respect to $\gamma$ are:
\begin{align*}
    \frac{\partial l(\alpha,\beta,\gamma)}{\partial\gamma}=&n\left[\frac{1}{\gamma}+\frac{\alpha}{\gamma^2}\cdot\frac{\Gamma'(\alpha/\gamma)}{\Gamma(\alpha/\gamma)}\right]-\beta^\gamma\sum_{i=1}^nx_i^\gamma\log(\beta x_i)\\
    =&n\left\{\frac{1}{\gamma}+\frac{\alpha}{\gamma^2}\left[-\gamma_0+\sum_{m=0}^\infty\left(\frac{1}{m+1}-\frac{1}{m+\alpha/\gamma}\right)\right]\right\}-\beta^\gamma\sum_{i=1}^nx_i^\gamma\log(\beta x_i)\\
    \frac{\partial^2l(\alpha,\beta,\gamma)}{\partial\gamma^2}=&n\left\{-\frac{1}{\gamma^2}-\frac{2\alpha}{\gamma^3}\left[-\gamma_0+\sum_{m=0}^\infty\left(\frac{1}{m+1}-\frac{1}{m+\alpha/\gamma}\right)\right]-\frac{\alpha^2}{\gamma^4}\sum_{m=0}^\infty\frac{1}{(m+\alpha/\gamma)^2}\right\}\\
    &-\beta^\gamma\sum_{i=1}^nx_i^\gamma[\log(\beta x_i)]^2
\end{align*}
where $\gamma_0=\lim_{n\to\infty}\left(\sum_{m=1}^nm^{-1}-\log n\right)\approx0.5772$ is the Euler-Mascheroni constant.

\begin{align*}
    \frac{\partial^2l(\alpha,\beta,\gamma)}{\partial\gamma^2}=&n\left\{-\frac{1}{\gamma^2}-\frac{2\alpha}{\gamma^3}\left[-\gamma_0+\sum_{m=0}^\infty\left(\frac{1}{m+1}-\frac{1}{m+\alpha/\gamma}\right)\right]-\frac{\alpha^2}{\gamma^4}\sum_{m=0}^\infty\frac{1}{(m+\alpha/\gamma)^2}\right\}\\
    &-\beta^\gamma\sum_{i=1}^nx_i^\gamma[\log(\beta x_i)]^2\\
    \ge&n\left\{-\frac{1}{\gamma^2}-\frac{2\alpha}{\gamma^3}\left[-\gamma_0+\sum_{m=0}^\infty\left(\frac{1}{m+1}-\frac{1}{m+\alpha/\gamma}\right)\right]-\frac{\alpha^2}{\gamma^4}\left(\frac{\gamma^2}{\alpha^2}+\sum_{m=1}^\infty\frac{1}{m^2}\right)\right\}\\
    &-\beta^\gamma\sum_{i=1}^nx_i^\gamma[\log(\beta x_i)]^2\\
    =&n\left\{-\frac{1}{\gamma^2}-\frac{2\alpha}{\gamma^3}\left[-\gamma_0+\sum_{m=0}^\infty\left(\frac{1}{m+1}-\frac{1}{m+\alpha/\gamma}\right)\right]-\frac{\alpha^2}{\gamma^4}\left(\frac{\gamma^2}{\alpha^2}+\frac{\pi^2}{6}\right)\right\}\\
    &-\beta^\gamma\sum_{i=1}^nx_i^\gamma[\log(\beta x_i)]^2\\
    =&n\left\{-\frac{2}{\gamma^2}-\frac{\pi^2}{6}\cdot\frac{\alpha^2}{\gamma^4}-\frac{2\alpha}{\gamma^3}\left[-\gamma_0+\sum_{m=0}^\infty\left(\frac{1}{m+1}-\frac{1}{m+\alpha/\gamma}\right)\right]\right\}\\
    &-\beta^\gamma\sum_{i=1}^nx_i^\gamma[\log(\beta x_i)]^2\\
    =&n\left[-\frac{2}{\gamma^2}+\frac{2\alpha\gamma_0}{\gamma^3}-\frac{\pi^2}{6}\cdot\frac{\alpha^2}{\gamma^4}-\frac{2\alpha}{\gamma^3}\sum_{m=0}^\infty\left(\frac{1}{m+1}-\frac{1}{m+\alpha/\gamma}\right)\right]\\
    &-\beta^\gamma\sum_{i=1}^nx_i^\gamma[\log(\beta x_i)]^2\\
    \ge&n\left[-\frac{2}{\gamma^2}+\frac{2\alpha\gamma_0}{\gamma^3}-\frac{\pi^2}{6}\cdot\frac{\alpha^2}{\gamma^4}-\frac{2\alpha}{\gamma^3}\sum_{m=0}^\infty\left(\frac{1}{m+1}-\frac{1}{m+\lceil\alpha/\gamma\rceil}\right)\right]\\
    &-\beta^\gamma\sum_{i=1}^nx_i^\gamma[\log(\beta x_i)]^2\\
    =&n\left[-\frac{2}{\gamma^2}+\frac{2\alpha\gamma_0}{\gamma^3}-\frac{\pi^2}{6}\cdot\frac{\alpha^2}{\gamma^4}-\frac{2\alpha}{\gamma^3}\left(\sum_{m=0}^\infty\frac{1}{m+1}-\sum_{m=\lceil\alpha/\gamma\rceil-1}^\infty\frac{1}{m+1}\right)\right]\\
    &-\beta^\gamma\sum_{i=1}^nx_i^\gamma[\log(\beta x_i)]^2\\
    =&n\left(-\frac{2}{\gamma^2}+\frac{2\alpha\gamma_0}{\gamma^3}-\frac{\pi^2}{6}\cdot\frac{\alpha^2}{\gamma^4}-\frac{2\alpha}{\gamma^3}\sum_{m=0}^{\lceil\alpha/\gamma\rceil-2}\frac{1}{m+1}\right)-\beta^\gamma\sum_{i=1}^nx_i^\gamma[\log(\beta x_i)]^2\\
    \ge&n\left(-\frac{2}{\gamma^2}+\frac{2\alpha\gamma_0}{\gamma^3}-\frac{\pi^2}{6}\cdot\frac{\alpha^2}{\gamma^4}-\frac{2\alpha}{\gamma^3}\sum_{m=0}^{\lceil\alpha/\gamma\rceil-2}\frac{1}{0+1}\right)-\beta^\gamma\sum_{i=1}^nx_i^\gamma[\log(\beta x_i)]^2\\
    =&n\left[-\frac{2}{\gamma^2}+\frac{2\alpha\gamma_0}{\gamma^3}-\frac{\pi^2}{6}\cdot\frac{\alpha^2}{\gamma^4}-\frac{2\alpha}{\gamma^3}\left(\lceil\frac{\alpha}{\gamma}\rceil-1\right)\right]-\beta^\gamma\sum_{i=1}^nx_i^\gamma[\log(\beta x_i)]^2\\
    \ge&n\left[-\frac{2}{\gamma^2}+\frac{2\alpha\gamma_0}{\gamma^3}-\frac{\pi^2}{6}\cdot\frac{\alpha^2}{\gamma^4}-\frac{2\alpha}{\gamma^3}\left(\frac{\alpha}{\gamma}+1-1\right)\right]-\beta^\gamma\sum_{i=1}^nx_i^\gamma[\log(\beta x_i)]^2\\
    =&n\left(-\frac{2}{\gamma^2}+\frac{2\alpha\gamma_0}{\gamma^3}-(2+\frac{\pi^2}{6})\frac{\alpha^2}{\gamma^4}\right)-\beta^\gamma\sum_{i=1}^nx_i^\gamma[\log(\beta x_i)]^2
\end{align*}
where $\lceil\cdot\rceil$ is the round-up function (ceiling function).

Upon reaching this step, we find that the term $n\left(-\frac{2}{\gamma^2}+\frac{2\alpha\gamma_0}{\gamma^3}-(2+\frac{\pi^2}{6})\frac{\alpha^2}{\gamma^4}\right)$ is easy to handle (by integrating the former term, we will obtain a cubic fractional function with respect to $\gamma$). However, the term $-\beta^\gamma\sum_{i=1}^nx_i^\gamma[\log(\beta x_i)]^2$ is extremely difficult to deal with. Fortunately, upon further observation and previous derivation given by the paper \cite{tian2025second}, we have two inequalities for exponential functions which can be applied to deal with the latter part given by \citeauthor{tian2025second}.

\begin{lemma}[\cite{tian2025second}]\label{exp_inequality}
    Let $\theta\ge0$, we have
    $$
    -e^\theta\ge-\frac{4e^{2\max(\theta^{(t)}-1,0)}}{(2\theta^{(t)}-\theta)^2},\ \forall\ 0\le\theta<2\theta^{(t)}\ {\rm and}\ \theta^{(t)}>0.
    $$
\end{lemma}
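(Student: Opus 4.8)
The plan is to recast the inequality as a one-variable maximization problem. Since $\theta<2\theta^{(t)}$ forces $(2\theta^{(t)}-\theta)^2>0$, multiplying the claimed inequality by the positive quantity $(2\theta^{(t)}-\theta)^2$ and negating both sides shows it is equivalent to
$$\phi(\theta)\triangleq e^\theta(2\theta^{(t)}-\theta)^2\le 4e^{2\max(\theta^{(t)}-1,0)}\qquad\text{for all }\theta\in[0,2\theta^{(t)}),$$
with $\theta^{(t)}>0$ fixed. So it suffices to maximize the smooth function $\phi$ over $[0,2\theta^{(t)})$ and verify that its supremum never exceeds the right-hand side.

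First I would compute $\phi'(\theta)=e^\theta(2\theta^{(t)}-\theta)\bigl(2\theta^{(t)}-\theta-2\bigr)$. On the admissible interval the factors $e^\theta$ and $2\theta^{(t)}-\theta$ are positive, so the sign of $\phi'$ is that of the linear factor $2\theta^{(t)}-\theta-2$, which vanishes only at $\tau=2(\theta^{(t)}-1)$. Whether $\tau$ lies inside $[0,2\theta^{(t)})$ depends on the sign of $\theta^{(t)}-1$, and this is what splits the argument into two cases.

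If $\theta^{(t)}\ge 1$, then $\tau=2(\theta^{(t)}-1)\in[0,2\theta^{(t)})$, and the sign chase gives $\phi'>0$ on $[0,\tau)$ and $\phi'<0$ on $(\tau,2\theta^{(t)})$, so $\phi$ attains its maximum at $\tau$ with value $\phi(\tau)=e^{2(\theta^{(t)}-1)}\cdot 2^2=4e^{2(\theta^{(t)}-1)}$, which matches the right-hand side exactly since here $\max(\theta^{(t)}-1,0)=\theta^{(t)}-1$. If $0<\theta^{(t)}<1$, then throughout $[0,2\theta^{(t)})$ one has $2\theta^{(t)}-\theta-2<2\theta^{(t)}-2<0$, so $\phi'<0$ and $\phi$ is strictly decreasing; its maximum is therefore $\phi(0)=(2\theta^{(t)})^2=4(\theta^{(t)})^2$, and since $\theta^{(t)}<1$ this is at most $4=4e^{2\max(\theta^{(t)}-1,0)}$, completing the proof.

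The differentiation and sign analysis are routine; the only delicate point — and the main obstacle — is the sub-unit case $\theta^{(t)}<1$, where the unconstrained critical point $\tau$ escapes the admissible interval, the extremum migrates to the boundary $\theta=0$, and the bound survives only because $(\theta^{(t)})^2\le 1$ there. It is also worth remarking that equality in the lemma is attained (at $\theta=2(\theta^{(t)}-1)$) precisely when $\theta^{(t)}\ge 1$, which is exactly why the truncation $\max(\cdot,0)$ appears in the statement and why the two cases genuinely behave differently.
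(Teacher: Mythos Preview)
Your proof is correct and follows essentially the same route as the paper's. The paper defines $g(\theta)=e^{\theta/2}(2\theta^{(t)}-\theta)$, finds its critical point $\theta^*=2(\theta^{(t)}-1)$, splits into the same two cases, bounds $g$ by $2e^{\max(\theta^{(t)}-1,0)}$, and then squares; you work directly with $\phi=g^2$, obtain the identical critical point and case split, and arrive at the same bound---the only difference is whether one squares before or after the optimization, which is cosmetic.
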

\begin{proof}
    Define $g(\theta)\triangleq e^{\theta/2}(2\theta^{(t)}-\theta)$. By setting $0=g'(\theta)=\frac{1}{2}e^{\theta/2}(2\theta^{(t)}-\theta)-e^{\theta/2}$, we obtain its root $\theta^*=2(\theta^{(t)}-1)$. Thus $g(\theta)>0$ if $\theta<\theta^*$ and $g(\theta)<0$ if $\theta>\theta^*$. So, we have
    \begin{align*}
        g(\theta)\le&g(\max(\theta^*,0))=\begin{cases}
            g(\theta^*)=2e^{\theta^{(t)}-1},&{\rm if}\ \theta^*>0,\\
            g(0)=2\theta^{(t)},&{\rm if}\ \theta^*\le0,
        \end{cases}\\
        =&\begin{cases}
            2e^{\theta^{(t)}-1},&{\rm if}\ \theta^{(t)}>1,\\
            2\theta^{(t)},&{\rm if}\ \theta^{(t)}\le1,
        \end{cases}\\
        \le&\begin{cases}
            2e^{\theta^{(t)}-1},&{\rm if}\ \theta^{(t)}>1,\\
            2,&{\rm if}\ \theta^{(t)}\le1,
        \end{cases}\\
        =&2e^{\max(\theta^{(t)}-1,0)}.
    \end{align*}
    So
    \begin{align*}
        e^{\theta/2}(2\theta^{(t)}-\theta)\le&2e^{\max(\theta^{(t)}-1,0)}\\
        \implies e^\theta(2\theta^{(t)}-\theta)^2\le&4e^{2\max(\theta^{(t)}-1,0)}\\
        \implies -e^\theta\ge&-\frac{4e^{2\max(\theta^{(t)}-1,0)}}{(2\theta^{(t)}-\theta)^2}.
    \end{align*}
\end{proof}

\begin{lemma}[\cite{tian2025second}]\label{exp_inequality_2}
    For any $\theta\ge0$, we have
    $$
    -e^{-\theta}\ge-\frac{2}{3}\theta^{-2}.
    $$
\end{lemma}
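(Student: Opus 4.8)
\emph{Proof proposal.} Since the right-hand side $-\tfrac{2}{3}\theta^{-2}$ equals $-\infty$ at $\theta=0$, the case $\theta=0$ is trivial, and for $\theta>0$ the asserted inequality is equivalent (multiply by $-\theta^{2}<0$) to the clean claim
$$
\theta^{2}e^{-\theta}\le\tfrac{2}{3},\qquad\forall\,\theta\ge0.
$$
The plan is therefore to locate the global maximum of $g(\theta)\triangleq\theta^{2}e^{-\theta}$ on $[0,\infty)$ by one-variable calculus and then compare that maximal value with $\tfrac{2}{3}$.

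First I would differentiate: $g'(\theta)=(2\theta-\theta^{2})e^{-\theta}=\theta(2-\theta)e^{-\theta}$, which is positive on $(0,2)$ and negative on $(2,\infty)$; hence $g$ increases up to $\theta=2$ and decreases afterwards, so $\max_{\theta\ge0}g(\theta)=g(2)=4e^{-2}$. Second, I would check the numerical inequality $4e^{-2}\le\tfrac{2}{3}$, which rearranges to $e^{2}\ge6$. To keep this rigorous without invoking a decimal expansion of $e$, I would use the series bound $e^{2}=\sum_{k\ge0}2^{k}/k!\ge1+2+2+\tfrac{4}{3}=\tfrac{19}{3}>6$. Combining these two steps gives $\theta^{2}e^{-\theta}\le4e^{-2}<\tfrac{2}{3}$ for every $\theta\ge0$, and dividing back by $\theta^{2}$ (for $\theta>0$) recovers $-e^{-\theta}\ge-\tfrac{2}{3}\theta^{-2}$, with $\theta=0$ already handled.

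There is no real obstacle here — the argument is an elementary optimization plus a numerical verification — so the only point meriting attention is making the estimate $e^{2}>6$ airtight, which the partial-sum bound above does. (An alternative route is to take logarithms and bound $2\ln\theta-\theta$ by its maximum $2\ln2-2$ at $\theta=2$, reducing again to $\ln6\le2$, i.e. $e^{2}\ge6$; but the derivative computation is the shortest path.) This lemma, together with Lemma~\ref{exp_inequality}, is what will let us bound the intractable term $-\beta^{\gamma}\sum_{i=1}^{n}x_{i}^{\gamma}[\log(\beta x_{i})]^{2}$ in the second derivative with respect to $\gamma$: writing $\beta^{\gamma}x_{i}^{\gamma}=e^{\gamma\log(\beta x_{i})}$, the present lemma (in the form $-\theta^{2}e^{-\theta}\ge-\tfrac{2}{3}$) controls the indices with $\log(\beta x_{i})<0$, while Lemma~\ref{exp_inequality} controls those with $\log(\beta x_{i})\ge0$.
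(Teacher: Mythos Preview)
Your argument is correct: reducing to $\theta^{2}e^{-\theta}\le\tfrac{2}{3}$, maximizing $g(\theta)=\theta^{2}e^{-\theta}$ at $\theta=2$ via $g'(\theta)=\theta(2-\theta)e^{-\theta}$, and then certifying $4e^{-2}<\tfrac{2}{3}$ through the partial-sum bound $e^{2}\ge\tfrac{19}{3}>6$ is a clean and complete proof. Note, however, that the present paper does not actually supply its own proof of this lemma --- it explicitly omits it and defers to the appendix of \cite{tian2023self} --- so there is no in-paper argument to compare yours against; your write-up would in fact fill that gap.
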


The proof of Lemma \ref{exp_inequality_2} is given by the paper \cite{tian2025second} in the appendix, and we omit it here.

By Lemma \ref{exp_inequality} and Lemma \ref{exp_inequality_2}, we can further deal with the difficult part of $\frac{\partial^2l(\alpha,\beta,\gamma)}{\partial\gamma^2}$ with some other techniques below:
\begin{align*}
    &-\beta^\gamma\sum_{i=1}^nx_i^\gamma[\log(\beta x_i)]^2\\
    =&-\sum_{i=1}^n(\beta x_i)^\gamma[\log(\beta x_i)]^2\\
    =&-\sum_{i=1}^ne^{\gamma\log(\beta x_i)}[\log(\beta x_i)]^2\\
    \ge&-\sum_{i=1}^n\left\{\frac{4e^{2\max(\gamma^{(t)}\log(\beta x_i)-1,0)}I(\log(\beta x_i)>0)}{[2\gamma^{(t)}\log(\beta x_i)-\gamma\log(\beta x_i)]^2}+\frac{2I(\log(\beta x_i)\le0)}{3[-\gamma\log(\beta x_i)]^2}\right\}[\log(\beta x_i)]^2\\
    =&-\sum_{i=1}^n\left[\frac{4e^{2\max(\gamma^{(t)}\log(\beta x_i)-1,0)}I(x_i>1/\beta)}{(2\gamma^{(t)}-\gamma)^2}+\frac{2I(x_i\le1/\beta)}{3\gamma^2}\right]
\end{align*}
where $I(\cdot)$ is the indicator function. We use Lemma \ref{exp_inequality} by $\forall\ i\in\{1,\cdots,n\}$ letting
\begin{align*}
    \theta_i=&\gamma\log(\beta x_i),\\
    \theta_i^{(t)}=&\gamma^{(t)}\log(\beta x_i),
\end{align*}
so the inequality holds when $0<\gamma<2\gamma^{(t)}$, i.e.,
\begin{align*}
    &0<\gamma\log(\beta x_i)<2\gamma^{(t)}\log(\beta x_i)\\
    \Longleftrightarrow\ &0<\theta_i<2\theta_i^{(t)},
\end{align*}
and we use Lemma \ref{exp_inequality_2} by $\forall\ i\in\{1,\cdots,n\}$ letting
$$
\theta_i=-\gamma\log(\beta x_i).
$$

As a result, we have
\begin{align*}
    \frac{\partial^2l(\alpha,\beta,\gamma)}{\partial\gamma^2}\ge&n\left[-\frac{2}{\gamma^2}+\frac{2\alpha\gamma_0}{\gamma^3}-\left(2+\frac{\pi^2}{6}\right)\frac{\alpha^2}{\gamma^4}\right]\\
    &-\sum_{i=1}^n\left[\frac{4e^{2\max(\gamma^{(t)}\log(\beta x_i)-1,0)}I(x_i>1/\beta)}{(2\gamma^{(t)}-\gamma)^2}+\frac{2I(x_i\le1/\beta)}{3\gamma^2}\right]\\
    \triangleq&b_\gamma(\alpha,\beta,\gamma)
\end{align*}
where $b_\gamma(\alpha,\beta,\gamma)$ is a lower bound of the second-order partial derivative $\frac{\partial^2l(\alpha,\beta,\gamma)}{\partial\gamma^2}$.

Then we can construct a $U$-function of $\frac{\partial l(\alpha,\beta,\gamma)}{\partial\gamma}$ by $b_\gamma(\alpha,\beta,\gamma)$ as follows:
\begin{align*}
    &U_\gamma(\alpha,\beta,\gamma|\alpha^{(t+1)},\beta^{(t+1)},\gamma^{(t)})\\
    =&\frac{\partial l(\alpha,\beta,\gamma)}{\partial\gamma}(\alpha^{(t+1)},\beta^{(t+1)},\gamma^{(t)})+\int_{\gamma^{(t)}}^\gamma b_\gamma(\alpha^{(t+1)},\beta^{(t+1)},z){\rm d}z\\
    =&n\left\{\frac{1}{\gamma^{(t)}}+\frac{\alpha^{(t+1)}}{(\gamma^{(t)})^2}\left[-\gamma_0+\sum_{m=0}^\infty\left(\frac{1}{m+1}-\frac{1}{m+\alpha^{(t+1)}/\gamma^{(t)}}\right)\right]\right\}\\
    &-(\beta^{(t+1)})^{\gamma^{(t)}}\sum_{i=1}^nx_i^{\gamma^{(t)}}\log(\beta^{(t+1)}x_i)\\
    &+\Bigg\{n\left[\frac{2}{z}-\frac{\alpha^{(t+1)}\gamma_0}{z^2}+\left(\frac{2}{3}+\frac{\pi^2}{18}\right)\frac{(\alpha^{(t+1)})^2}{z^3}\right]\\
    &-\sum_{i=1}^n\left[\frac{4e^{2\max(\gamma^{(t)}\log(\beta^{(t+1)}x_i)-1,0)}I(x_i>1/\beta^{(t+1)})}{2\gamma^{(t)}-z}-\frac{2I(x_i\le1/\beta^{(t+1)})}{3z}\right]\Bigg\}_{\gamma^{(t)}}^\gamma\\
    =&n\left\{\frac{1}{\gamma^{(t)}}+\frac{\alpha^{(t+1)}}{(\gamma^{(t)})^2}\left[-\gamma_0+\sum_{m=0}^\infty\left(\frac{1}{m+1}-\frac{1}{m+\alpha^{(t+1)}/\gamma^{(t)}}\right)\right]\right\}\\
    &-(\beta^{(t+1)})^{\gamma^{(t)}}\sum_{i=1}^nx_i^{\gamma^{(t)}}\log(\beta^{(t+1)}x_i)\\
    &-n\left[\frac{2}{\gamma^{(t)}}-\frac{\alpha^{(t+1)}\gamma_0}{(\gamma^{(t)})^2}+\left(\frac{2}{3}+\frac{\pi^2}{18}\right)\frac{(\alpha^{(t+1)})^2}{(\gamma^{(t)})^3}\right]\\
    &+\sum_{i=1}^n\left[\frac{4e^{2\max(\gamma^{(t)}\log(\beta^{(t+1)}x_i)-1,0)}I(x_i>1/\beta^{(t+1)})}{\gamma^{(t)}}-\frac{2I(x_i\le1/\beta^{(t+1)})}{3\gamma^{(t)}}\right]\\
    &+n\left[\frac{2}{\gamma}-\frac{\alpha^{(t+1)}\gamma_0}{\gamma^2}+\left(\frac{2}{3}+\frac{\pi^2}{18}\right)\frac{(\alpha^{(t+1)})^2}{\gamma^3}\right]\\
    &-\sum_{i=1}^n\left[\frac{4e^{2\max(\gamma^{(t)}\log(\beta^{(t+1)}x_i)-1,0)}I(x_i>1/\beta^{(t+1)})}{2\gamma^{(t)}-\gamma}-\frac{2I(x_i\le1/\beta^{(t+1)})}{3\gamma}\right].
\end{align*}

Letting $U_\gamma(\alpha,\beta,\gamma|\alpha^{(t+1)},\beta^{(t+1)},\gamma^{(t)})=0$, we get
$$
\gamma^{(t+1)}={\rm sol}\left\{0<\gamma<2\gamma^{(t)}:d_4^{(t)}\gamma^4+d_3^{(t)}\gamma^3+d_2^{(t)}\gamma^2+d_1^{(t)}\gamma+d_0^{(t)}=0\right\}
$$
where
\begin{align*}
    d_4^{(t)}=&-n\left\{\frac{1}{\gamma^{(t)}}+\frac{\alpha^{(t+1)}}{(\gamma^{(t)})^2}\left[-\gamma_0+\sum_{m=0}^\infty\left(\frac{1}{m+1}-\frac{1}{m+\alpha^{(t+1)}/\gamma^{(t)}}\right)\right]\right\}\\
    &+(\beta^{(t+1)})^{\gamma^{(t)}}\sum_{i=1}^nx_i^{\gamma^{(t)}}\log(\beta^{(t+1)}x_i)\\
    &+n\left[\frac{2}{\gamma^{(t)}}-\frac{\alpha^{(t+1)}\gamma_0}{(\gamma^{(t)})^2}+\left(\frac{2}{3}+\frac{\pi^2}{18}\right)\frac{(\alpha^{(t+1)})^2}{(\gamma^{(t)})^3}\right]\\
    &-\sum_{i=1}^n\left[\frac{4e^{2\max(\gamma^{(t)}\log(\beta^{(t+1)}x_i)-1,0)}I(x_i>1/\beta^{(t+1)})}{\gamma^{(t)}}-\frac{2I(x_i\le1/\beta^{(t+1)})}{3\gamma^{(t)}}\right],\\
    d_3^{(t)}=&2\gamma^{(t)}\Bigg(n\left\{\frac{1}{\gamma^{(t)}}+\frac{\alpha^{(t+1)}}{(\gamma^{(t)})^2}\left[-\gamma_0+\sum_{m=0}^\infty\left(\frac{1}{m+1}-\frac{1}{m+\alpha^{(t+1)}/\gamma^{(t)}}\right)\right]\right\}\\
    &-(\beta^{(t+1)})^{\gamma^{(t)}}\sum_{i=1}^nx_i^{\gamma^{(t)}}\log(\beta^{(t+1)}x_i)\\
    &-n\left[\frac{2}{\gamma^{(t)}}-\frac{\alpha^{(t+1)}\gamma_0}{(\gamma^{(t)})^2}+\left(\frac{2}{3}+\frac{\pi^2}{18}\right)\frac{(\alpha^{(t+1)})^2}{(\gamma^{(t)})^3}\right]\\
    &+\sum_{i=1}^n\left[\frac{4e^{2\max(\gamma^{(t)}\log(\beta^{(t+1)}x_i)-1,0)}I(x_i>1/\beta^{(t+1)})}{\gamma^{(t)}}-\frac{2I(x_i\le1/\beta^{(t+1)})}{3\gamma^{(t)}}\right]\Bigg)\\
    &-2n-\frac{2}{3}\sum_{i=1}^nI\left(x_i\le\frac{1}{\beta^{(t+1)}}\right)-\sum_{i=1}^n4e^{2\max(\gamma^{(t)}\log(\beta^{(t+1)}x_i)-1,0)}I\left(x_i>\frac{1}{\beta^{(t+1)}}\right),\\
    d_2^{(t)}=&2\gamma^{(t)}\left[2n+\frac{2}{3}\sum_{i=1}^nI\left(x_i\le\frac{1}{\beta^{(t+1)}}\right)\right]+\gamma_0n\alpha^{(t+1)},\\
    d_1^{(t)}=&-2\gamma_0n\alpha^{(t+1)}\gamma^{(t)}-\left(\frac{2}{3}+\frac{\pi^2}{18}\right)n(\alpha^{(t+1)})^2,\\
    d_0^{(t)}=&\left(\frac{4}{3}+\frac{\pi^2}{9}\right)n\gamma^{(t)}(\alpha^{(t+1)})^2.
\end{align*}

According to the Galois theory which is introduced in great detail by the book \cite{morandi2012field} in abstract algebra, the Abel–Ruffini theorem states that general polynomial equations of degree five or higher do not have explicit radical formulas for finding their roots, which has many different types of proofs and one of them is given by the paper \cite{skopenkov2011simple}. The equation we have here is a univariate quartic equation, which is the equation of the highest degree that has an analytical explicit radical root formula. This implies, in a sense, that our lower bound function $b_\gamma(\alpha,\beta,\gamma)$ is very tight.

We now present the pseudo code for each iteration of our SeLF (or US) algorithm, which estimates the parameters $\alpha$, $\beta$, and $\gamma$ as follows:
\begin{table*}[h!]
    \begin{center}
        \begin{tabularx}{\textwidth}{lXXX}
            \textbf{Step 1:}&$\alpha^{(t+1)}={\rm sol}\left\{\alpha>0:a^{(t)}\alpha^2+b^{(t)}\alpha+c^{(t)}=0\right\}$;\\
            \textbf{Step 2:}&$\beta^{(t+1)}=\left(\frac{n\alpha^{(t+1)}}{\gamma^{(t)}\sum_{i=1}^nx_i^{\gamma^{(t)}}}\right)^{\frac{1}{\gamma^{(t)}}}$;\\
            \textbf{Step 3:}&$\gamma^{(t+1)}={\rm sol}\left\{0<\gamma<2\gamma^{(t)}:d_4^{(t)}\gamma^4+d_3^{(t)}\gamma^3+d_2^{(t)}\gamma^2+d_1^{(t)}\gamma+d_0^{(t)}=0\right\}$.
        \end{tabularx}
    \end{center}
\end{table*}
\FloatBarrier

\section{Experiments}

We have completed the SeLF (or US) algorithm designed for estimating the parameters of the generalized Gamma distribution. The next step is to perform simulation experiments to verify the feasibility of this algorithm and compare its speed performance with the traditional Newton algorithm. The first step in conducting these experiments is to find an algorithm that can generate random variables from the generalized Gamma distribution. Since the generalized Gamma distribution is not a commonly used distribution, and we cannot directly use library functions to generate it, we need to design our own algorithm to generate it.

We will use the Sampling/Importance Resampling (SIR) method which is proposed by \citeauthor{rubin1987calculation}'s paper \cite{rubin1987calculation} and is introduced in detail by the book \cite{tian2023comp} to generate the generalized Gamma distribution, whose probability density function is given by $f(x)=\frac{\beta^\alpha\gamma}{\Gamma(\alpha/\gamma)}x^{\alpha-1}e^{-(\beta x)^\gamma}$ for $x>0$, where $\alpha$, $\beta$, and $\gamma$ are positive parameters. It is very difficult to generate a sample from the original probability density function $f(x)$. For fitting, we will use the probability density function of the traditional Gamma distribution, $g(x)=\frac{\beta^\alpha}{\Gamma(\alpha)}x^{\alpha-1}e^{-\beta x}$ for $x>0$, as our importance resampling density (also known as the proposal density) from which it is relatively easy to draw a sample. First, we could generate $X^{(1)},\cdots,X^{(J)}\overset{{\rm iid}}{\sim}g(x)=\frac{\beta^\alpha}{\Gamma(\alpha)}x^{\alpha-1}e^{-\beta x},\ x>0$ We then define:
\begin{align*}
    w(X^{(j)})\triangleq&\frac{f(X^{(j)})}{g(X^{(j)})}\\
    =&\frac{\Gamma(\alpha)\gamma}{\Gamma(\alpha/\gamma)}\exp\left[\beta X^{(j)}-(\beta X^{(j)})^\gamma\right]
\end{align*}
for all $j\in\left\{1,\cdots,J\right\}$. The second step is to adjust the generated samples $\{X^{(j)}\}_{j=1}^J$ so that part of them becomes samples from $f(x)$ by comparing their ratios $w(X^{(j)})=\frac{f(X^{(j)})}{g(X^{(j)})}$. We can calculate the probability weights by
$$
\omega_j=\frac{w(X^{(j)})}{\sum_{j'=1}^Jw(X^{(j')})},\ j=1,\cdots,J.
$$

So we can give the pseudo code of the SIR method without replacement to generate $I$ samples from the generalized Gamma distribution with parameters $\alpha,\beta,\gamma$.
\begin{table*}[h!]
    \begin{center}
        \begin{tabularx}{\textwidth}{lXXX}
            \textbf{Step 1:}&Generate $X^{(1)},\cdots,X^{(J)}\overset{{\rm iid}}{\sim}{\rm Gamma}(\alpha,\beta)$;\\
            \textbf{Step 2:}&Select a subset $\{X^{(k_i)}\}_{i=1}^I$ from $\{X^{(j)}\}_{j=1}^J$ via re-sapling \textit{without replacement} from the discrete distribution on $\{X^{(j)}\}_{j=1}^J$ with probabilities $\{\omega_j\}_{j=1}^J$.
        \end{tabularx}
    \end{center}
\end{table*}
\FloatBarrier

The SIR algorithm is exact as $J/I\to\infty$, and the larger the discrepancy between the importance resampling function $g(x)$ and the target distribution $f(x)$, the greater the required value of $J$ relative to $I$. This is why we have chosen the traditional Gamma distribution to serve as the importance resampling function. In practice, it is recommended that $J/I\ge10$, with a suggested value of $J/I=20$. \citeauthor{smith1992bayesian} recommended $J/I\ge10$ in their examples of the paper \cite{smith1992bayesian}.

We need another algorithm to compare with our derived SeLF algorithm, and for this, we can choose Newton's method. The idea behind Newton's method is to use the ``straight line approximation" strategy to approach the zero points of a curve step by step, while our US algorithm (or SeLF algorithm) uses a simpler curve to approximate the zero points of the original curve. Therefore, in this case, Newton's method is a more suitable comparison algorithm. The detailed discussion of Newton's method is in the book \cite{tian2023comp}. We now give the pseudo code of each iteration of using Newton's method to estimate the parameters of the generalized Gamma distribution below.

\begin{table*}[h!]
    \begin{center}
        \begin{tabularx}{\textwidth}{lXXX}
            \textbf{Step 1:}&$\alpha^{(t+1)}=\alpha^{(t)}-\frac{\partial l(\alpha,\beta,\gamma)}{\partial\alpha}(\alpha^{(t)},\beta^{(t)},\gamma^{(t)})/\frac{\partial^2l(\alpha,\beta,\gamma)}{\partial\alpha^2}(\alpha^{(t)},\beta^{(t)},\gamma^{(t)})$;\\
            \textbf{Step 2:}&$\beta^{(t+1)}=\left(\frac{n\alpha^{(t+1)}}{\gamma^{(t)}\sum_{i=1}^nx_i^{\gamma^{(t)}}}\right)^{\frac{1}{\gamma^{(t)}}}$;\\
            \textbf{Step 3:}&$\gamma^{(t+1)}=\gamma^{(t)}-\frac{\partial l(\alpha,\beta,\gamma)}{\partial\gamma}(\alpha^{(t+1)},\beta^{(t+1)},\gamma^{(t)})/\frac{\partial^2l(\alpha,\beta,\gamma)}{\partial\gamma^2}(\alpha^{(t+1)},\beta^{(t+1)},\gamma^{(t)})$.
        \end{tabularx}
    \end{center}
\end{table*}
\FloatBarrier

Since the MLE of $\beta$ has an explicit expression given $\alpha$ and $\beta$, we do not need to use Newton's method on it, which is similar to the SeLF (or US) algorithm we put forward previously.

Our experiments first generate 1000 independent identically distributed (i.i.d.) samples $\{x_i\}_{i=1}^{1000}$ from the generalized Gamma distribution with parameters $\alpha=2$, $\beta=3$ and $\gamma=2$ and set the initial values of both the SeLF algorithm and the Newton's method as $\alpha^{(0)}=3$, $\beta^{(0)}=2$ and $\gamma^{(0)}=3$. We iterate 200 times using both the SeLF algorithm and Newton's method, respectively.

Our experimental results consist of a total of 201 rows and 6 columns. The first row shows the initial values $(\alpha^{(0)},\beta^{(0)},\gamma^{(0)})$ of the SeLF algorithm and the Newton's method. Each of the remaining rows represents the maximum likelihood estimate given by the SeLF algorithm and Newton's method after increasing the number of iterations by one compared to the previous row. The first three columns correspond to the parameter estimates given by the SeLF algorithm, while the last three columns correspond to the parameter estimates given by Newton's method. For ease of reading, we only present the first row and the rows corresponding to iteration counts that are multiples of ten.

\begin{table*}[h!]
    \begin{center}
        \begin{tabular}{|c|c|c|c|c|c|c|}
            \hline
            Iteration&\multicolumn{3}{c|}{SeLF}&\multicolumn{3}{c|}{Newton}\\
            Counts&$\alpha$&$\beta$&$\gamma$&$\alpha$&$\beta$&$\gamma$\\
            \hline
            0   & 3         & 2         & 3         & 3         & 2         & 3         \\ \hline
1   & 1.6681069 & 2.2756410 & 2.9668302 & 0.2102209 & 1.1409181 & 2.3397234 \\ \hline
10  & 1.6963639 & 2.3615199 & 2.7584139 & 2.3638493 & 5.6555624 & 1.2432267 \\ \hline
20  & 1.7477509 & 2.4524057 & 2.6021852 & 2.7105817 & 5.8654599 & 1.2916193 \\ \hline
30  & 1.7886466 & 2.5286310 & 2.4910096 & 2.6820881 & 5.6478946 & 1.3174232 \\ \hline
40  & 1.8216708 & 2.5928895 & 2.4084698 & 2.6444994 & 5.4290938 & 1.3441471 \\ \hline
50  & 1.8487115 & 2.6473951 & 2.3451589 & 2.6062957 & 5.2185234 & 1.3723236 \\ \hline
60  & 1.8711203 & 2.6939061 & 2.2953465 & 2.5677527 & 5.0167583 & 1.4019825 \\ \hline
70  & 1.8898812 & 2.7338131 & 2.2553567 & 2.5290153 & 4.8242072 & 1.4331187 \\ \hline
80  & 1.9057248 & 2.7682211 & 2.2227284 & 2.4902527 & 4.6412652 & 1.4656924 \\ \hline
90  & 1.9192033 & 2.7980141 & 2.1957538 & 2.4516625 & 4.4683106 & 1.4996205 \\ \hline
100 & 1.9307412 & 2.8239062 & 2.1732106 & 2.4134715 & 4.3056967 & 1.5347656 \\ \hline
110 & 1.9406705 & 2.8464803 & 2.1542007 & 2.3759343 & 4.1537410 & 1.5709268 \\ \hline
120 & 1.9492543 & 2.8662162 & 2.1380490 & 2.3393297 & 4.0127110 & 1.6078316 \\ \hline
130 & 1.9567038 & 2.8835125 & 2.1242379 & 2.3039539 & 3.8828069 & 1.6451327 \\ \hline
140 & 1.9631908 & 2.8987029 & 2.1123639 & 2.2701101 & 3.7641422 & 1.6824113 \\ \hline
150 & 1.9688562 & 2.9120685 & 2.1021077 & 2.2380941 & 3.6567225 & 1.7191897 \\ \hline
160 & 1.9738167 & 2.9238477 & 2.0932134 & 2.2081776 & 3.5604265 & 1.7549538 \\ \hline
170 & 1.9781696 & 2.9342435 & 2.0854736 & 2.1805901 & 3.4749905 & 1.7891859 \\ \hline
180 & 1.9819968 & 2.9434298 & 2.0787181 & 2.1555018 & 3.4000017 & 1.8214031 \\ \hline
190 & 1.9853674 & 2.9515561 & 2.0728065 & 2.1330106 & 3.3349017 & 1.8511961 \\ \hline
200 & 1.9883404 & 2.9587517 & 2.0676217 & 2.1131357 & 3.2790025 & 1.8782621 \\ \hline
        \end{tabular}
    \end{center}
\end{table*}
\FloatBarrier

All of the results of our experiments are shown in the following figure.

\begin{figure}[h!]
    \centering
    \includegraphics{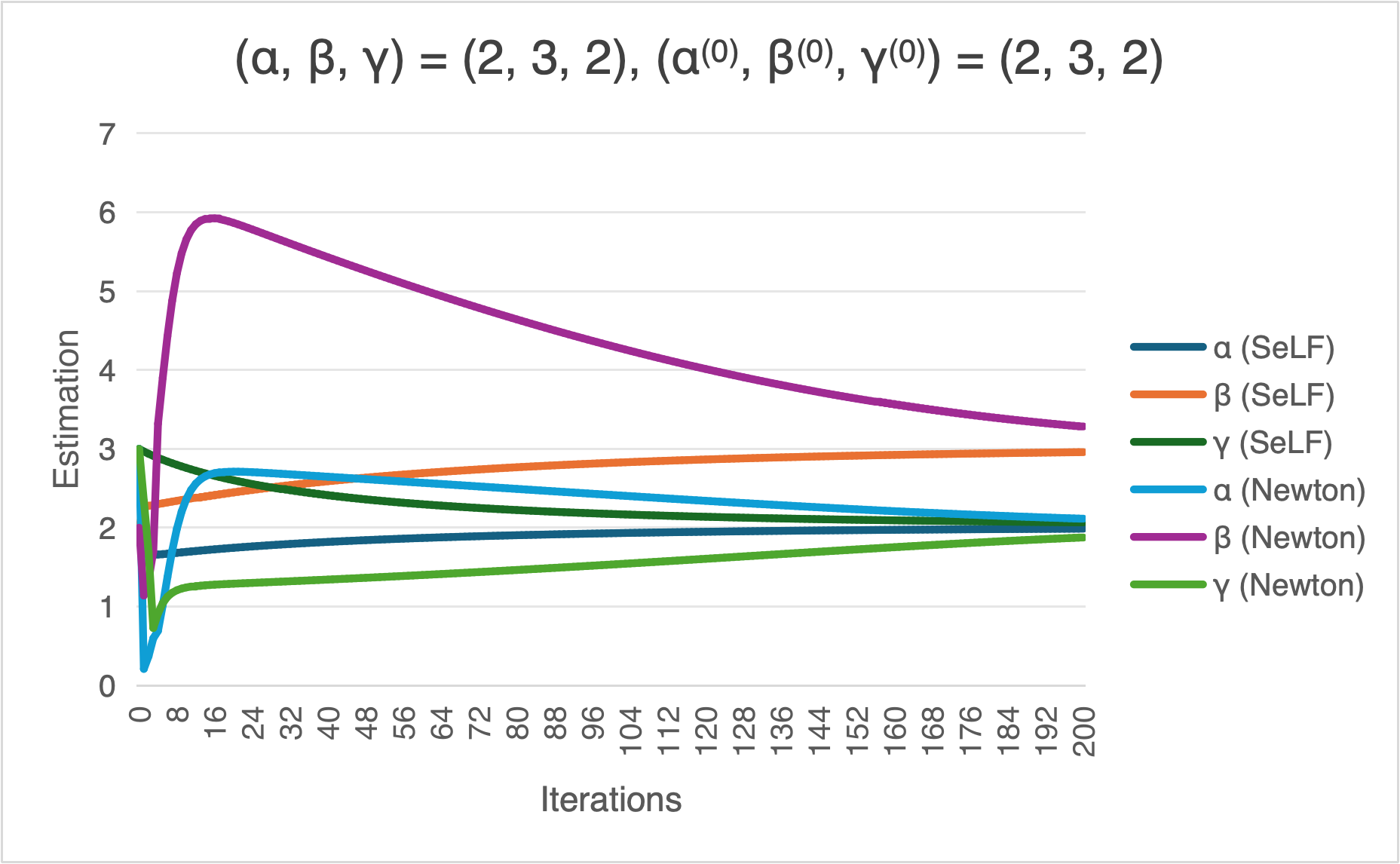}
    \caption{Simulation results of both algorithms.}
\end{figure}
\FloatBarrier

The above chart shows the changes in the parameters $\alpha$, $\beta$, and $\gamma$ estimated by our SeLF algorithm and Newton's method as the number of iterations increases, but it is difficult to observe due to the large number of curves displayed (a total of six). Therefore, we have drawn two separate charts to display the estimation results of the SeLF algorithm and Newton's method, as shown below.

\begin{figure}[h!]
    \centering
    \includegraphics{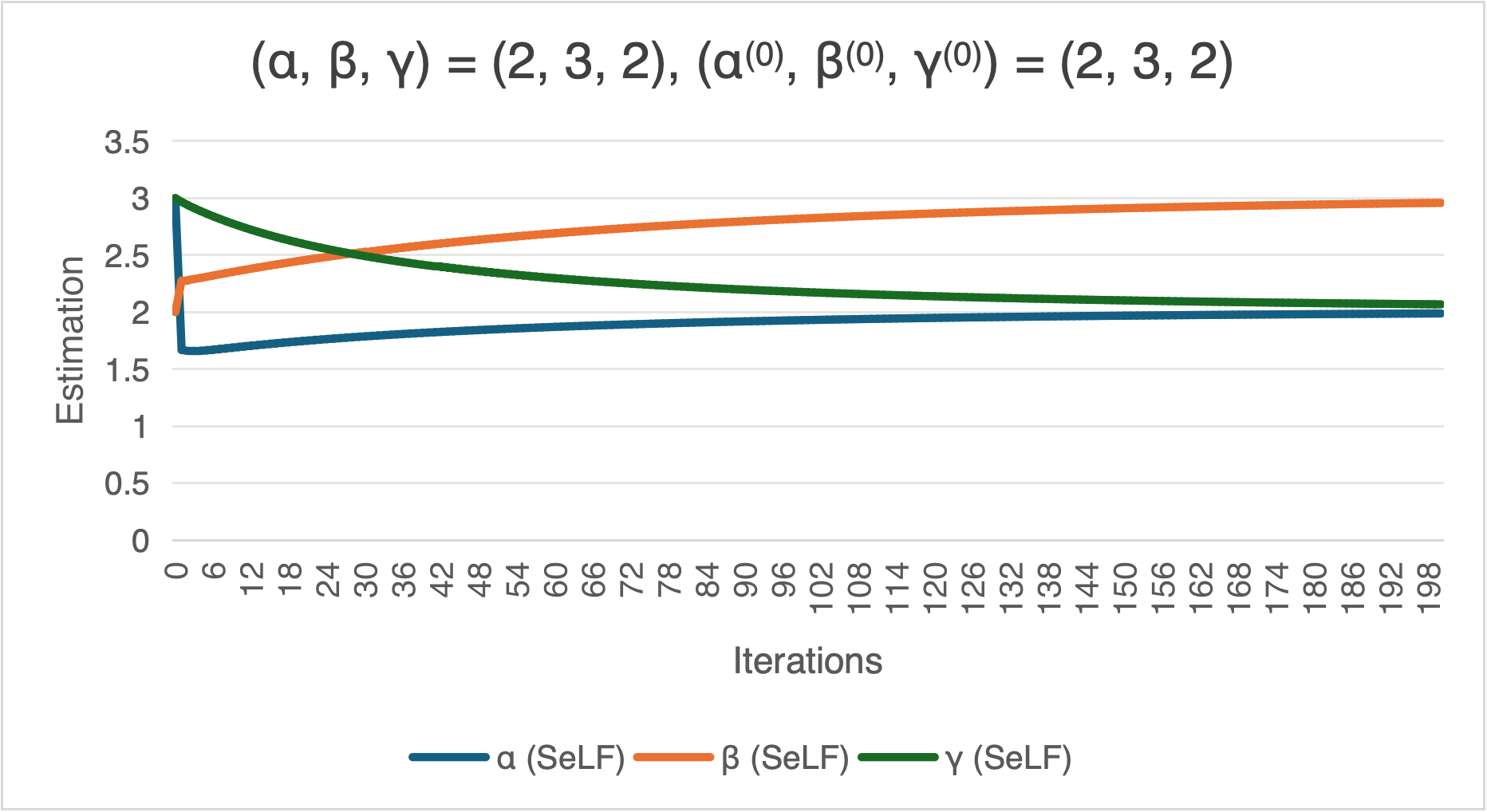}
    \caption{Simulation results of the SeLF algorithm.}
\end{figure}

\begin{figure}[h!]
    \centering
    \includegraphics{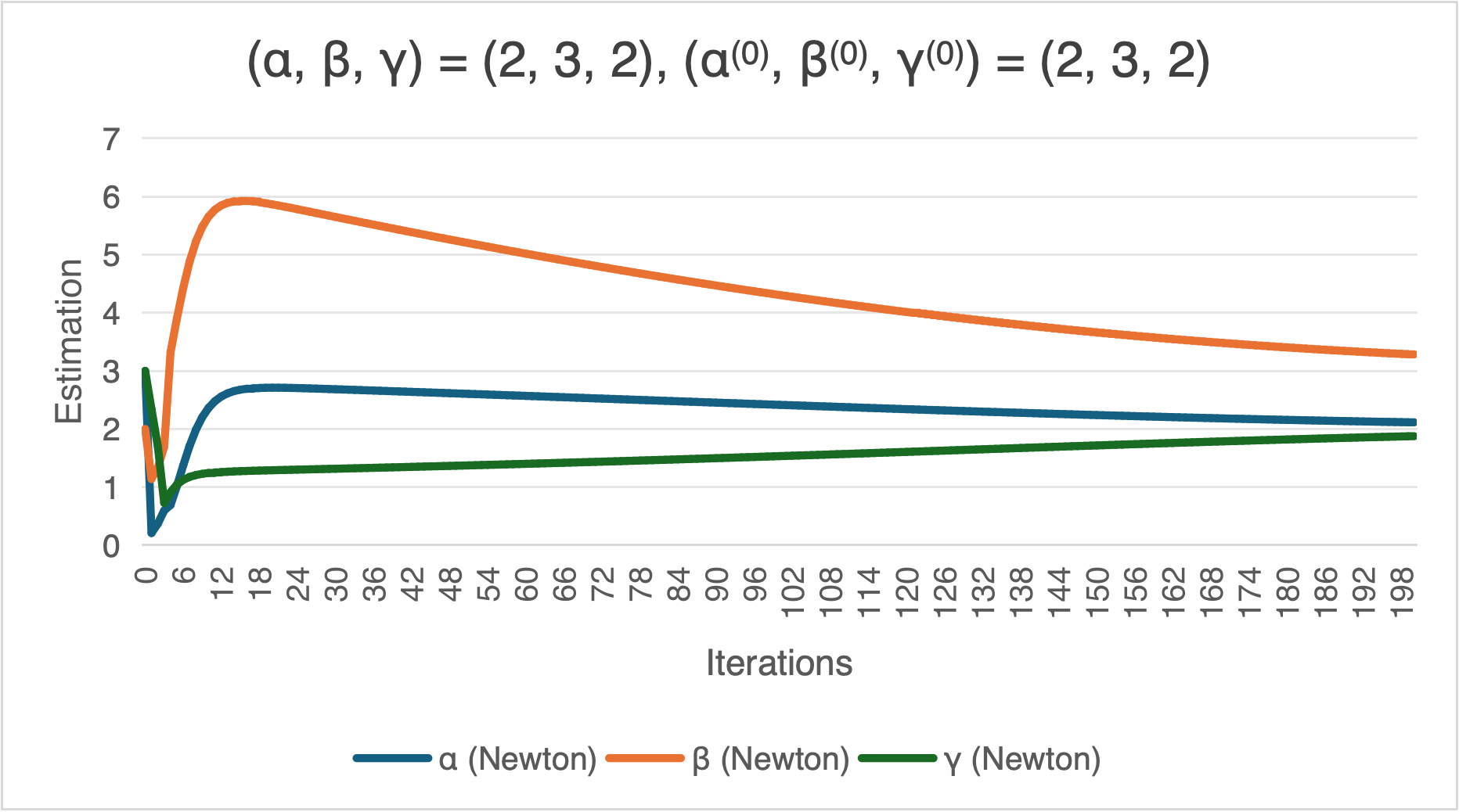}
    \caption{Simulation results of Newton's method.}
\end{figure}
\FloatBarrier

To facilitate a separate comparison of the trends in the changes of each parameter ($\alpha$, $\beta$, and $\gamma$) in the SeLF algorithm and Newton's method, we have created another three individual charts displaying the trends in the changes of each parameter as follows.

\begin{figure}[h!]
    \centering
    \subfigure[Compare estimates of $\alpha$.]
    {
        \includegraphics[width=0.47\columnwidth]{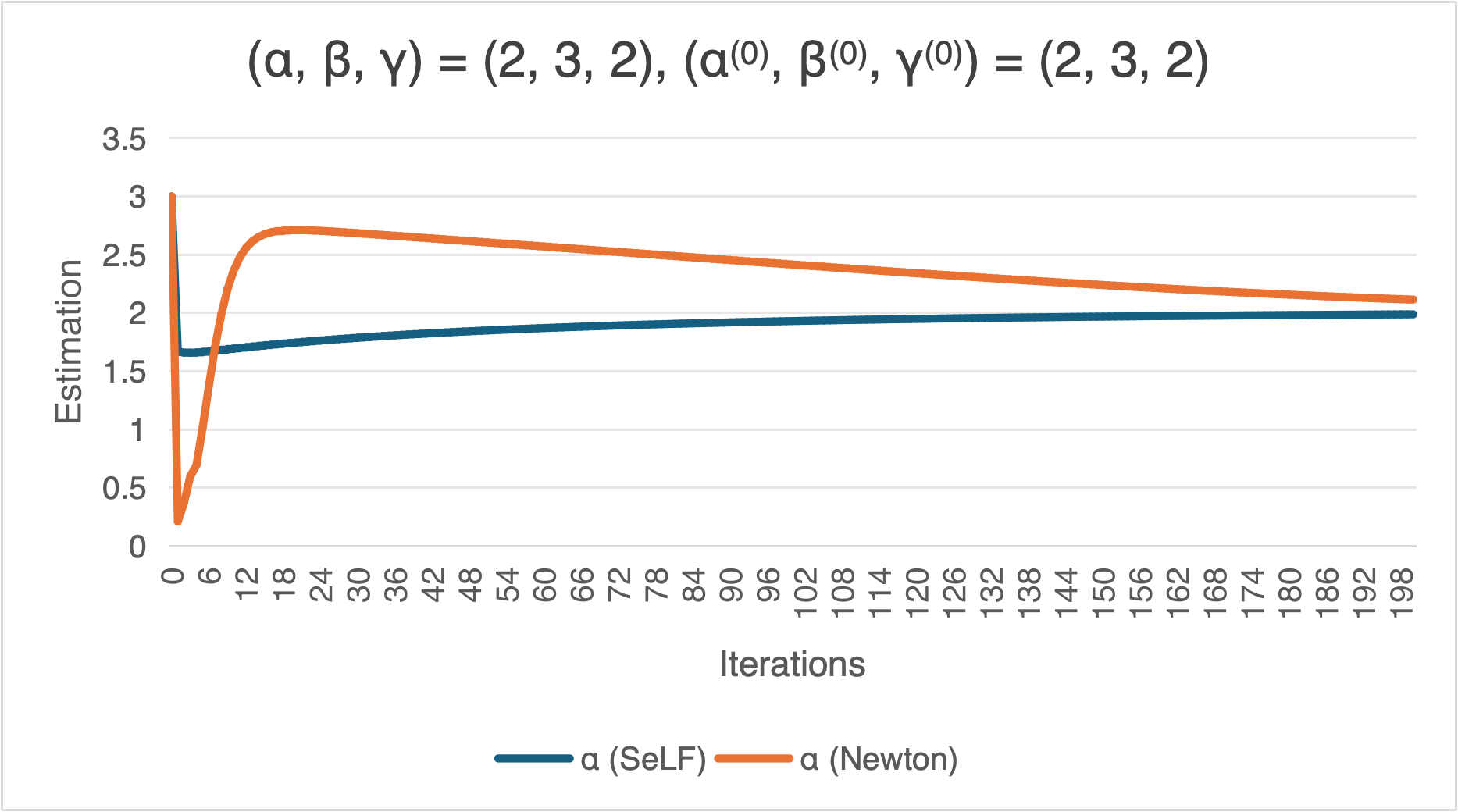}
    }
    \subfigure[Compare estimates of $\beta$.]
    {
        \includegraphics[width=0.47\columnwidth]{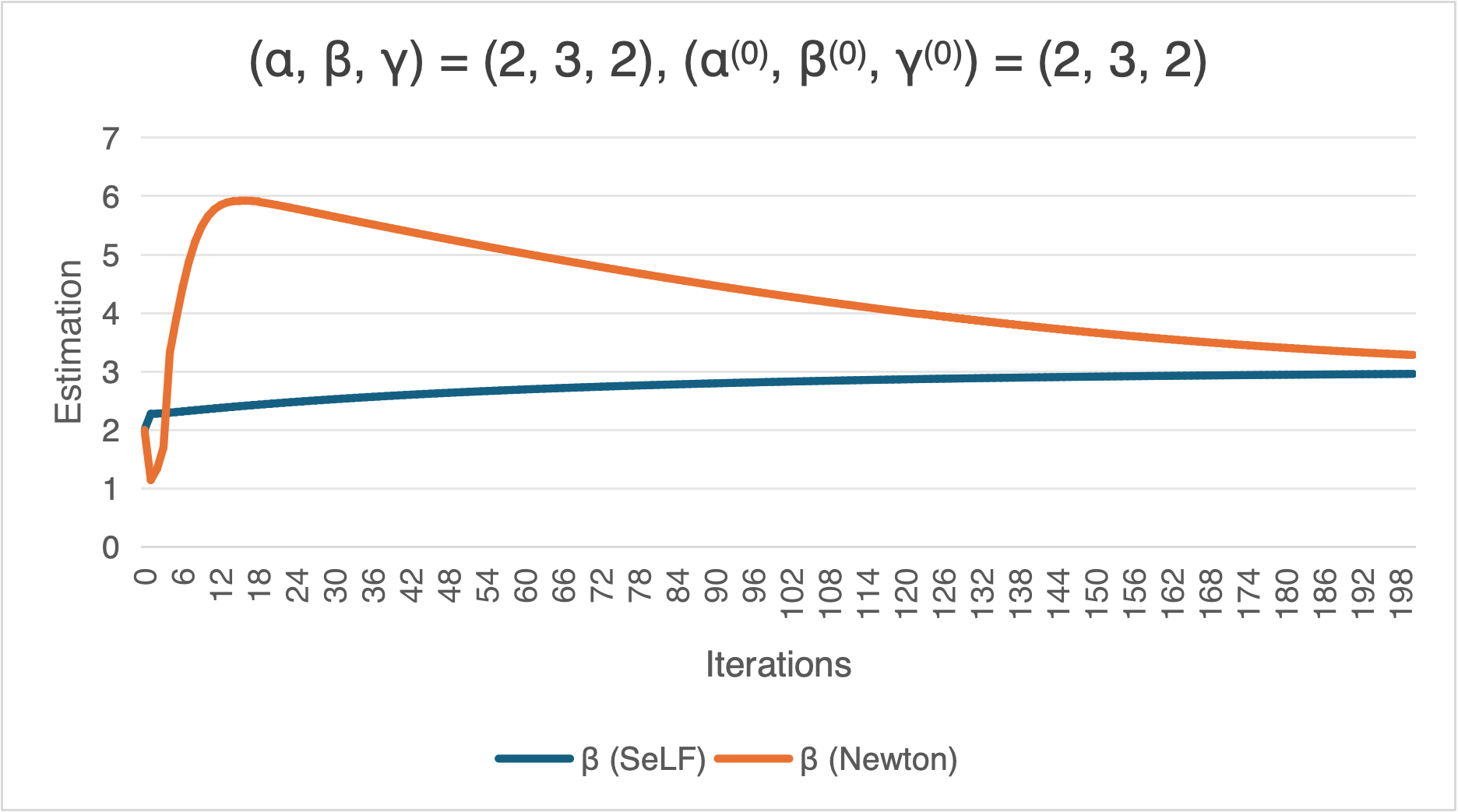}
    }
    \subfigure[Compare estimates of $\gamma$.]
    {
        \includegraphics[width=0.47\columnwidth]{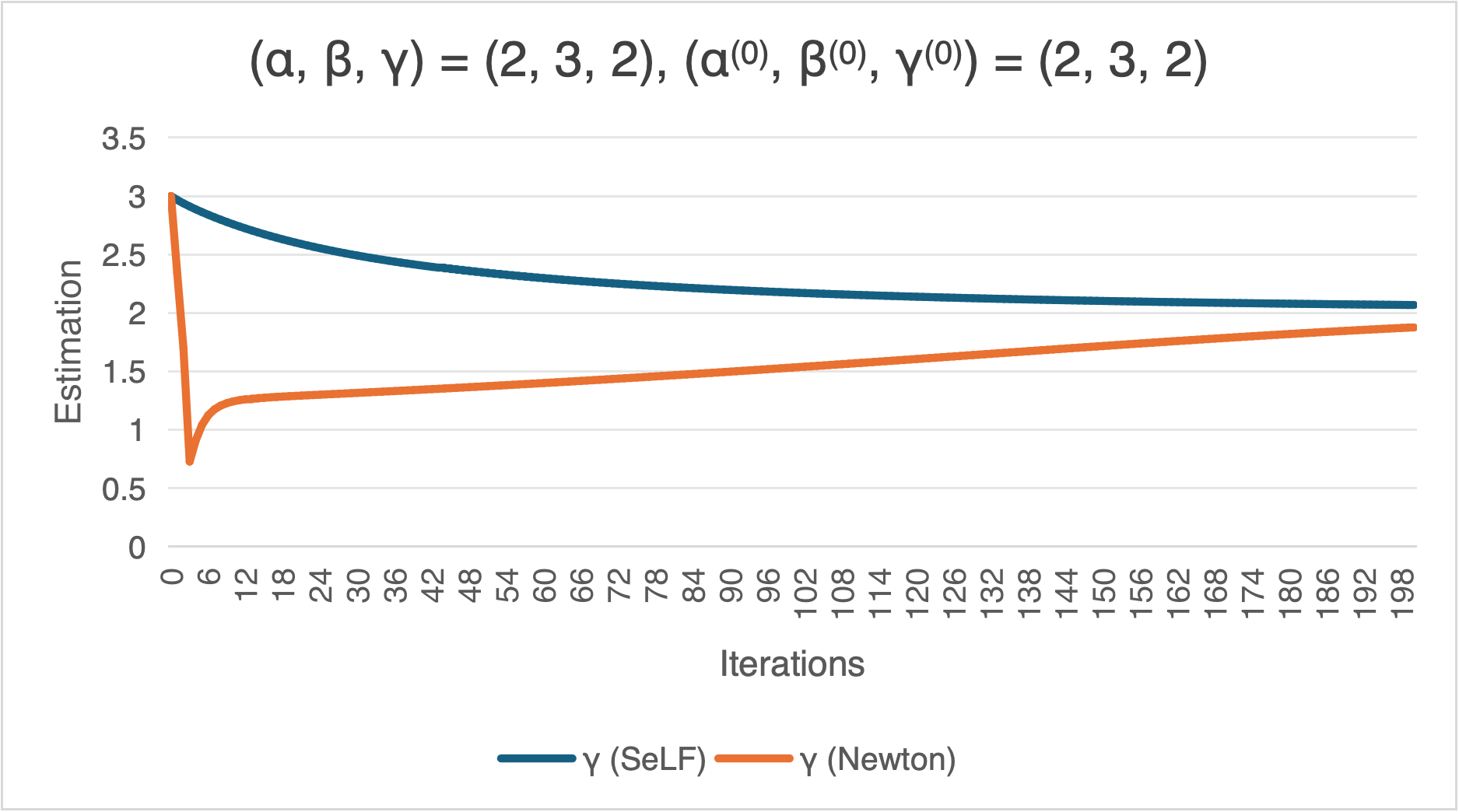}
    }
    \caption{Compare estimates of each parameter. (The blue curves display the estimates of the SeLF algorithm while the orange curves display the estimates of Newton's method.)}
\end{figure}
\FloatBarrier

By observing the experimental results above, we can comprehensively compare the advantages and disadvantages of the SeLF algorithm and Newton's method for maximum likelihood estimation of the parameters in the generalized Gamma distribution in terms of stability, convergence speed, algorithm complexity, and estimation accuracy. These advantages and disadvantages are described as follows.

\paragraph{Stability}

In terms of stability, the SeLF algorithm is more stable than Newton's method. It is observed that the SeLF algorithm usually only experiences a reverse jump in the first iteration, and thereafter it monotonically increases or decreases to approach the true values of the parameters ($\alpha$, $\beta$, and $\gamma$). In contrast, Newton's method exhibits greater fluctuation during iterations and tends to have more turning points. It typically takes two or more reversals before it achieves a monotonic trend towards the true values of the parameters.

\paragraph{Convergence Speed}

In terms of convergence speed, under the same number of iterations, the SeLF algorithm has a faster convergence speed than Newton's method. The maximum likelihood estimates obtained by the SeLF algorithm are generally closer to the true values than those obtained by Newton's method, and this tends to occur after a high number of iterations.

\paragraph{Algorithm Complexity}

In terms of algorithm complexity, the SeLF algorithm requires more time for each iteration than Newton's method because it involves solving a quartic equation and calculating more summations related to the data (the i.i.d. samples $\{x_i\}_{i=1}^{1000}$). Therefore, under the same number of iterations, Newton's method takes much less time than the SeLF algorithm. However, this difference is only a constant factor, and it is not the focus of our research.

\paragraph{Accuracy}

In terms of accuracy, under the same number of iterations, the SeLF algorithm provides maximum likelihood estimates that are closer to the true values than those obtained by Newton's method, particularly when the number of iterations is relatively high. In contrast, Newton's method tends to produce estimates that are farther away from the true values.

\section{Conclusion}

In summary, this undergraduate thesis presents a thorough investigation of the application of the SeLF algorithm (or US algorithm) to determine the maximum likelihood estimation of the parameters, namely $\alpha$, $\beta$, and $\gamma$, in the generalized Gamma distribution. The research begins with a brief introduction to the generalized Gamma distribution and its real-world data fitting application, followed by a concise overview of the MM algorithm, the SeLF algorithm, and the US algorithm. By implementing the SeLF algorithm (or US algorithm), this study proposes a novel method for calculating maximum likelihood estimates of the parameters in the generalized Gamma distribution, which demonstrates superior performance in terms of stability, accuracy, and convergence speed. The results reveal that the SeLF algorithm surpasses traditional methods, such as Newton's method. The proposed method offers practical significance and application value for resolving practical problems related to generalized Gamma distributions. Overall, this thesis contributes to the advancement of statistical methods for parameter estimation in intricate models and provides a reference for further exploration of the theory and application of the generalized Gamma distribution, the SeLF algorithm, and the US algorithm.

\printbibliography

\section*{Appendix}
\subsection*{Experiment Code}

\begin{lstlisting}[style=python]
import math
import numpy as np
import sympy as sp

gamma_0 = 0.5772156649015328606065120900824024310421593359399235988057672348
J_div_I = 30
infinite_sum_upper_calculation_times = 1000


def generalized_gamma(alpha, beta, gamma, size):
    x = np.random.gamma(alpha, 1 / beta, size * J_div_I)
    w = (
        math.gamma(alpha)
        * gamma
        / math.gamma(alpha / gamma)
        * np.exp(beta * x - (beta * x) ** gamma)
    )
    w /= np.sum(w)
    return np.random.choice(x, size, False, w)


def solve_quadratic_equation(a, b, c):
    x = sp.Symbol("x")
    f = a * x**2 + b * x + c
    return sp.solve(f)


def solve_quartic_equation(d_4, d_3, d_2, d_1, d_0):
    x = sp.Symbol("x")
    f = d_4 * x**4 + d_3 * x**3 + d_2 * x**2 + d_1 * x + d_0
    return sp.solve(f)


def get_positive_largest(x, upper_bound=math.inf):
    result = 0
    for element in x:
        try:
            if element < upper_bound and element > result:
                result = element
        except TypeError:
            continue
    return result if result > 0 else None


def derivative_log_gamma_function(x):
    ascendence = np.arange(0, infinite_sum_upper_calculation_times)
    return -gamma_0 + np.sum(1 / (ascendence + 1) - 1 / (ascendence + x))


def self_iteration(x, alpha, beta, gamma):
    # n = len(x)
    # a = -math.pi**2 / 6 * n / gamma**2
    # b = (
    #     n * (math.log(beta) - 1 / gamma * derivative_log_gamma_function(alpha / gamma))
    #     + np.sum(np.log(x))
    #     - n / alpha
    #     + math.pi**2 / 6 * n * alpha / gamma**2
    # )
    # c = n
    # alpha = get_positive_largest(solve_quadratic_equation(a, b, c))
    # beta = (n * alpha / (gamma * np.sum(x**gamma))) ** (1 / gamma)
    # sign = np.sign(x - 1 / beta)
    # I = np.where(sign == -1, 0, sign)
    # d_4 = (
    #     -n
    #     * (1 / gamma + alpha / gamma**2 * derivative_log_gamma_function(alpha / gamma))
    #     + beta**gamma * np.dot(x**gamma, np.log((beta * x).astype("float")))
    #     + n
    #     * (
    #         2 / gamma
    #         - alpha * gamma_0 / gamma**2
    #         + (2 / 3 + math.pi**2 / 18) * alpha**2 / gamma**3
    #     )
    #     - np.sum(
    #         4
    #         * np.exp(
    #             (
    #                 2 * np.maximum(gamma * np.log((beta * x).astype("float")) - 1, 0)
    #             ).astype("float")
    #         )
    #         / gamma
    #         * I
    #         - 2 / (3 * gamma) * (1 - I)
    #     )
    # )
    # d_3 = (
    #     2 * gamma * (-d_4)
    #     - 2 * n
    #     - 2 / 3 * np.sum(1 - I)
    #     - np.sum(
    #         4
    #         * np.exp(
    #             (
    #                 2 * np.maximum(gamma * np.log((beta * x).astype("float")) - 1, 0)
    #             ).astype("float")
    #         )
    #         * I
    #     )
    # )
    # d_2 = 2 * gamma * (2 * n + 2 / 3 * np.sum(1 - I)) + gamma_0 * n * alpha
    # d_1 = -2 * gamma_0 * n * alpha * gamma - (2 / 3 + math.pi**2 / 18) * n * alpha**2
    # d_0 = (4 / 3 + math.pi**2 / 9) * n * gamma * alpha**2
    # gamma = get_positive_largest(
    #     solve_quartic_equation(d_4, d_3, d_2, d_1, d_0), 2 * gamma
    # )
    # return alpha, beta, gamma
    if (not hasattr(self_iteration, "x_id")) or id(x) != self_iteration.x_id:
        self_iteration.x_id = id(x)
        self_iteration.n = len(x)
        sign = np.sign(x - 1 / beta)
        self_iteration.I = np.where(sign == -1, 0, sign)
    a = -math.pi**2 / 6 * self_iteration.n / gamma**2
    b = (
        self_iteration.n
        * (math.log(beta) - 1 / gamma * derivative_log_gamma_function(alpha / gamma))
        + np.sum(np.log(x))
        - self_iteration.n / alpha
        + math.pi**2 / 6 * self_iteration.n * alpha / gamma**2
    )
    c = self_iteration.n
    alpha = get_positive_largest(solve_quadratic_equation(a, b, c))
    beta = (self_iteration.n * alpha / (gamma * np.sum(x**gamma))) ** (1 / gamma)
    tmp = 4 * np.exp(
        (2 * np.maximum(gamma * np.log((beta * x).astype("float")) - 1, 0)).astype(
            "float"
        )
    )
    d_4 = (
        -self_iteration.n
        * (1 / gamma + alpha / gamma**2 * derivative_log_gamma_function(alpha / gamma))
        + beta**gamma * np.dot(x**gamma, np.log((beta * x).astype("float")))
        + self_iteration.n
        * (
            2 / gamma
            - alpha * gamma_0 / gamma**2
            + (2 / 3 + math.pi**2 / 18) * alpha**2 / gamma**3
        )
        - np.sum(
            tmp / gamma * self_iteration.I - 2 / (3 * gamma) * (1 - self_iteration.I)
        )
    )
    d_3 = (
        2 * gamma * (-d_4)
        - 2 * self_iteration.n
        - 2 / 3 * np.sum(1 - self_iteration.I)
        - np.sum(tmp * self_iteration.I)
    )
    d_2 = (
        2 * gamma * (2 * self_iteration.n + 2 / 3 * np.sum(1 - self_iteration.I))
        + gamma_0 * self_iteration.n * alpha
    )
    d_1 = (
        -2 * gamma_0 * self_iteration.n * alpha * gamma
        - (2 / 3 + math.pi**2 / 18) * self_iteration.n * alpha**2
    )
    d_0 = (4 / 3 + math.pi**2 / 9) * self_iteration.n * gamma * alpha**2
    gamma = get_positive_largest(
        solve_quartic_equation(d_4, d_3, d_2, d_1, d_0), 2 * gamma
    )
    return alpha, beta, gamma


def self_simulation(x, alpha_0, beta_0, gamma_0, iteration_times=math.inf):
    alpha, beta, gamma = alpha_0, beta_0, gamma_0
    if math.inf != iteration_times:
        alpha_list, beta_list, gamma_list = [alpha], [beta], [gamma]
    i = 0
    while True:
        i += 1
        print("SeLF Iterations:", i)
        alpha, beta, gamma = self_iteration(x, alpha, beta, gamma)
        print(alpha, beta, gamma)
        if math.inf != iteration_times:
            alpha_list.append(alpha)
            beta_list.append(beta)
            gamma_list.append(gamma)
            if iteration_times == i:
                break
    return alpha_list, beta_list, gamma_list


def newton_iteration(x, alpha, beta, gamma):
    n = len(x)
    alpha -= (
        n * (math.log(beta) - 1 / gamma * derivative_log_gamma_function(alpha / gamma))
        + np.sum(np.log(x))
    ) / (
        -n
        / gamma**2
        * np.sum(
            1
            / (np.arange(0, infinite_sum_upper_calculation_times) + alpha / gamma) ** 2
        )
    )
    beta = (n * alpha / (gamma * np.sum(x**gamma))) ** (1 / gamma)
    gamma -= (
        n
        * (1 / gamma + alpha / gamma**2 * derivative_log_gamma_function(alpha / gamma))
        - beta**gamma * np.dot(x**gamma, np.log((beta * x).astype("float")))
    ) / (
        n
        * (
            -1 / gamma**2
            - 2 * alpha / gamma**3 * derivative_log_gamma_function(alpha / gamma)
            - alpha**2
            / gamma**4
            * np.sum(
                1
                / (np.arange(0, infinite_sum_upper_calculation_times) + alpha / gamma)
                ** 2
            )
        )
        - beta**gamma * np.dot(x**gamma, np.log((beta * x).astype("float")))
    )
    return alpha, beta, gamma


def newton_simulation(x, alpha_0, beta_0, gamma_0, iteration_times=math.inf):
    alpha, beta, gamma = alpha_0, beta_0, gamma_0
    if math.inf != iteration_times:
        alpha_list, beta_list, gamma_list = [alpha], [beta], [gamma]
    i = 0
    while True:
        i += 1
        print("Newton Iterations:", i)
        alpha, beta, gamma = newton_iteration(x, alpha, beta, gamma)
        print(alpha, beta, gamma)
        if math.inf != iteration_times:
            alpha_list.append(alpha)
            beta_list.append(beta)
            gamma_list.append(gamma)
            if iteration_times == i:
                break
    return alpha_list, beta_list, gamma_list


def output_simulation(
    n, alpha, beta, gamma, alpha_list, beta_list, gamma_list, iteration_times, tag
):
    with open(
        str(n)
        + "["
        + str(alpha)
        + ","
        + str(beta)
        + ","
        + str(gamma)
        + "]"
        + str(alpha_list[0])
        + ","
        + str(beta_list[0])
        + ","
        + str(gamma_list[0])
        + "["
        + str(iteration_times)
        + "]"
        + str(tag)
        + ".csv",
        mode="w+",
        encoding="utf-8",
    ) as output:
        output.write(
            str(alpha_list[0]) + "," + str(beta_list[0]) + "," + str(gamma_list[0])
        )
        for i in range(1, len(alpha_list)):
            output.write(
                "\n"
                + str(alpha_list[i])
                + ","
                + str(beta_list[i])
                + ","
                + str(gamma_list[i])
            )


def simulation(
    n, alpha, beta, gamma, alpha_0, beta_0, gamma_0, iteration_times=math.inf
):
    x = generalized_gamma(alpha, beta, gamma, n)
    alpha_list, beta_list, gamma_list = self_simulation(
        x, alpha_0, beta_0, gamma_0, iteration_times
    )
    output_simulation(
        n,
        alpha,
        beta,
        gamma,
        alpha_list,
        beta_list,
        gamma_list,
        iteration_times,
        "self",
    )
    alpha_list, beta_list, gamma_list = newton_simulation(
        x, alpha_0, beta_0, gamma_0, iteration_times
    )
    output_simulation(
        n,
        alpha,
        beta,
        gamma,
        alpha_list,
        beta_list,
        gamma_list,
        iteration_times,
        "newton",
    )


np.random.seed(1027)
simulation(1000, 2, 3, 2, 3, 2, 3, 200)

\end{lstlisting}

\section*{Acknowledgement}

I would like to express my sincere gratitude to Professor Guoliang Tian and his PhD student Xunjian Li for their invaluable guidance and assistance throughout the course of this research. Without their support, I would not have been able to complete this thesis. In particular, their development of the SeLF algorithm and US algorithm has made it possible for me to investigate this new topic related to the maximum likelihood estimation of the parameters in the generalized Gamma distribution.

I would also like to thank my roommate Xujian Chen for his unwavering support and companionship during the difficult times of this research. His help in reviewing the complex mathematical derivations has been essential to avoiding many errors and has allowed me to complete this thesis successfully.

Finally, I would like to express my gratitude to all the faculty members of the Department of Statistics and Data Science, the Department of Mathematics, and the Department of Computer Science and Engineering at Southern University of Science and Technology who have contributed to my academic growth. This research has broadened my horizons and enhanced my knowledge and skills, and I am deeply grateful for this invaluable experience.

Thank you all for your support and encouragement.

\end{document}